\newtheorem{theorem}{Theorem}
\newtheorem{proof}{Proof}
\newtheorem{remark}{Remark}
\newtheorem{assumption}{Assumption}
\newtheorem{proposition}{Proposition}
\begin{document}

\title{Balanced control between performance and saturation for constrained nonlinear systems}

\author{Peng Wang,~ 
        Haibin Wang,~
        Shuzhi~Sam~Ge,~
        \IEEEmembership{Fellow,~IEEE}~
        and Xiaobing~Zhang 
 
\thanks{Peng Wang and Xiaobing Zhang are with the School of Energy and Power Engineering, Nanjing
University of Science and Technology, Nanjing 210094, China (e-mail:
benjywp711@gmail.com, zhangxb680504@163.com).}
\thanks{Haibin Wang is with 
College of Automation, Harbin University of Science and Technology, Harbin, 150080, China (e-mail: wanghaibinheu@hrbeu.edu.cn)}
\thanks{Shuzhi Sam Ge is with the Department of Electrical and Computer Engineering, National University of Singapore, Singapore 117576 (e-mail: samge@nus.edu.sg).}
}


\maketitle

\begin{abstract}
This paper addresses the balanced control between performance and saturation for a class of constrained nonlinear systems, including the branches: balanced command filtered backstepping (BCFB) and balanced performance control (BPC). To balance the interconnection and conflict between performance and saturation constraints, define a performance safety evaluation (PSE) function, which evaluates the system safety under the destabilizing effect variables (DEVs) like saturation quantity and filter errors, then the cumulative effects of  DEVs are fully utilized and compensated for the performance recovery. Specifically, there exists some degree of tolerance for the DEVs in the safety region, and the compensation operation works when the evaluation of the system goes dangerous. The advantages of the proposed methodology are illustrated in the numerical simulation. 
\end{abstract}

\begin{IEEEkeywords}
Balanced control, command filtered backstepping, performance control, constrained control
\end{IEEEkeywords}

\IEEEpeerreviewmaketitle

\section{Introduction}

\IEEEPARstart{I}{n} practice,  there exit various constraints for nonlinear system control, such as (i) artificially added performance and safety constraints, (ii) capacity constraints due to physical stoppages, saturation. However, these two constraints are sometimes difficult to satisfy at the same time, for example, if a strict control performance is required but the system control capability is limited. 

In this paper we present a systematic approach to the balance between the performance and saturation constraints.

\emph{Performance-constrained control:}
Traditional efforts have to be made to the performance constraints in the fashion of prescribed performance control (PPC) and barrier Lyapunov function (BLF). The most typical and powerful tool for dealing with output and full (or partial) state constraints is the so-called barrier Lyapunov function (BLF) method \cite{tee2009barrier,tee2011control,tee2011control2}. With the help of the barrier function which grows to infinity at some finite limit, the state constraints are never violated. The effective prescribed performance control (PPC) \cite{bechlioulis2008robust,bechlioulis2011robust,
bechlioulis2016trajectory,Ma2021Adaptive} incorporates with equivalent design and barrier Lyapunov function, such that the system states are guaranteed to evolve within the user-specified prescribed performance functions (PPF), which determines convergence rate, overshoot and steady-state error, through error transformation function (ETF). 

\emph{Saturation-constrained control:}
It is clear that input saturation, as a type of the most important non-smooth nonlinearities in the industrial process, can frequently degrade the system performance, and even undermine the system stability. Accordingly, great attention has been paid to the control of input saturated systems, like 
anti-windup approach \cite{hu2008anti,grimm2004robust,Wang2020Output} and auxiliary system method \cite{chen2011adaptive,
he2015vibration}, etc.
For the PPC under input saturation, 
in \cite{wang2018online}, a constrained dynamic surface control with additional auxiliary system for saturation compensation was studied for the online performance-based control; in \cite{wang2019improved}, a novel prescribed performance controller with composite performance functions was constructed to circumvent the chattering of  control input with an auxiliary system avoiding the overrun of control input.


\emph{Balance between performance and saturation:}
In the existing considerable research results, some efforts have paid attention to the balance between the performance and saturation constraints. 
Model predictive controllers (MPC) can effectively handle nonlinearities and
can handle constraints on both the input and the states of a system \cite{HLi2014event,Johannes2019nonlinear}. Every control problem
can be formed as a constrained optimization problem in MPC such that the control performance can be optimized without saturation happens.
A flexible performance control (FPC) scheme \cite{yong2020flexible,Ji2021Saturation} introduces the modification signals into the predetermined performance function, thus features the capability of avoiding performance violation due to input saturation. 
 As the method termed explicit reference governor (ERG) for tacking problem developed in \cite{Nicotra19,Nicotra18,PWang2021prescribed}, the performance and control input constraints are converted into limitations on upper Lyapunov function (or the limitation in an invariant set), and the balance is achieved through the application of a trade-off reference trajectory. 

\emph{Contributions of this work:} 
Notably, seldom research takes into account that the limited saturation sometimes has a tolerable impact on the system performance, or the occurrence of saturation may sometimes positively accelerate system convergence.
Motivated by the above discussion, a systematic balanced control methodology is proposed for achieving the performance requirements while avoiding saturation from destroying system stability.

The main contributions are stated as follows:

\begin{enumerate} [(i)]
\item  Define a performance safety evaluation (PSE) function, which evaluates the system safety under the destabilizing effect variables (DEVs) like saturation quantity and filter errors. The previous balance operation between performance and saturation usually ignores the system's own adjustment capabilities, i.e., (a) the
constrained optimization problem in MPC rarely provides analytic solutions, and usually the solution demands a significant computation power; (b) any saturation impact in FPC is transmitted to the performance function, no matter if the saturation has a tolerable impact on the system performance; (c) the applied control constraints in MPC and ERG are hard limits in the invariant set, ignoring the fullest potential by saturating the inputs on purpose.   
In contrast, the PSE focuses on whether the system is in the safety region, not whether the control input is saturated.

\item Introduce a methodology for solving the balanced control between performance and saturation constraints with application to the named balanced command filtered backstepping (BCFB) and balanced performance control (BPC).  
The cumulative effects of the DEVs are fully utilized and compensated in the first-order dynamics of auxiliary systems of BCFB / dynamics of performance functions of BPC.  

\item Specifically,  according to the evaluation of PSE, (a) in BCFB, it is decided whether to unify the two errors (tracking error $s_1$ and compensated tracking error $z_1$) to the same within the safety region which is contained in the performance-constrained invariant set, or to compensate for DEVs such that the system remains stable; (b) in BPC, it is decided whether to  adhere to the performance function to converge at the established speed, or to compensate the performance function for DEVs to avoid constraint conflicts through the nature of the invariant set.
\end{enumerate}

\emph{Notation}:
Throughout this paper, ${\mathbb{R}}$ denotes the set of real numbers, ${\mathbb{R}}^{n\times m}$ represents the set of $n \times m$ real matrix. $\mathbb{I}_{i:j}$ with integers $i<j$ denotes the set $\lbrace i,i+1,\cdots,j\rbrace$.
$( \cdot )^T$ is the transpose of one matrix, 
diag$( \cdot )$ (or blk diag) represents the diagonal (block-diagonal) matrix. $\left|  \cdot  \right|$ is the absolute value of a scalar and $\left\|  \cdot  \right\|$ is defined as the Euclidean norm of a vector. Given a matrix $M$, $M>0$ (or $M<0$) means that $M$ is a positive definite (or negative definite) matrix.

\section{Problem Formulation and  Preliminaries}
Consider the following nonlinear system 
\begin{equation}\label{Eq_2_1}
\begin{aligned}
  & {{{\dot{x}}}_{i}}(t)={{f}_{i}}({{{\bar{x}}}_{i}})+{{g}_{i}}({{{\bar{x}}}_{i}}){{x}_{i+1}}+{{\omega }_{i}}(t),\ \ \ i\in {{\mathbb{I}}_{1:n-1}} \\ 
 & {{{\dot{x}}}_{n}}(t)={{f}_{n}}({{{\bar{x}}}_{n}})+{{g}_{n}}({{{\bar{x}}}_{n}})\text{sat}(u)+{{\omega }_{n}}(t), \\ 
 & y(t)={{x}_{1}}(t) \\ 
\end{aligned}
\end{equation}
where ${{x}_{i}}, {{y}}, {{u}_{i}}, {{\omega }_{i}}\in {\mathbb{R} }$ are the state, output, control input and external disturbance, respectively, ${{\bar{x}}_{i}}={{\left[ x_{1} ,x_{2} ,\ldots, x_{i}  \right]}^{T}}$ is the state vector of the first $i$ differential equations, ${{{x}}}={{\bar{x}}_{n}}$,
${{{\omega}}}={{\left[ \omega_{1} ,\omega_{2} ,\ldots, \omega_{n} \right]}^{T}}$ is the vector of disturbances. The functions ${{f}_{i}}({{\bar{x}}_{i}})$ and ${{g}_{i}}({{\bar{x}}_{i}})$, $ i\in {{\mathbb{I}}_{1:n}}$ are assumed to be known. The saturation function $\text{sat}(u)$ is given by
\begin{equation}\label{Eq_2_2}
\text{sat}({{u}})=\left\{ \begin{matrix}
\begin{aligned}
   &{{u}_{\max }},\ \ \  \text{if}\ {{u}}\ge{{u}_{\max }}  \\
   &{{u}},\ \ \ \ \ \ \ \ \text{if}\ {{u}_{\min }}<{{u}}<{{u}_{\max }}  \\
   &{{u}_{\min }}, \ \ \ \  \text{if}\ {{u}}\le{{u}_{\min }}  \\
\end{aligned}
\end{matrix} \right. 
\end{equation}
where ${{u}_{\min }}<0$ and ${{u}_{\max }}>0$ are known constants.

\textcolor{black}{
\emph{Control objective:} Design a control scheme for the nonlinear system (\ref{Eq_2_1}) such that 
\begin{enumerate}[(i)]
\item Saturation-constrained tracking control: In the presence of input saturation and external disturbance, the system output $y(t)$ tracks the desired trajectory ${{y}_{d}}(t)$  with the bounded tracking error ${{e}}(t)=y(t)-{{y}_{d}}(t)$;
\item Performance-constrained tracking control:  Despite affected by these effects, the tracking error ${{e}}(t)$ is constantly contained in the specified region $\left\{ e | -\underline{\delta }\rho (t)<e(t)<\bar{\delta }\rho (t) \right\}$ where $\underline{\delta }, \bar{\delta }>0 $, ${{\rho }} (t)$    represents the performance constraint function which ultimately approaches the steady bound ${{\rho }_{ \infty }}$. 
\end{enumerate}}

To ensure controllability, we will invoke the following assumption, which is standard in command filtered backstepping.

\begin{assumption} \label{Assumption_1}
\cite{farrell2009command} Let the domain $D \subset \mathbb{R}^n $  denote the compact set that contains the origin, the initial condition $x(0)$ and the trajectory $y_d(t)$. For system (\ref{Eq_2_1}), as  $i\in {{\mathbb{I}}_{1:n}}$, 
 \begin{enumerate}[(i)]
\item  $f_i({{{\bar{x}}}_{i}})$ and $g_i({{{\bar{x}}}_{i}})$ and their first partial derivatives are continuous and bounded on $D$ for all $t \ge 0$. 
Each $f_i({{{\bar{x}}}_{i}})$ is locally Lipschitz in ${{{\bar{x}}}_{i}}$;  
\item there exist positive constants $g_{i,m}$ and $g_{i,M}$ such that $g_{i,M} > \vert g_i({{{\bar{x}}}_{i}}) \vert > g_{i,m}$ on $D $ for all $t \ge 0$. This implies that each $g_i({{{\bar{x}}}_{i}})$ has a constant, known sign. 
\item the external disturbance $\omega_i(t)$ is continuous and bounded.
\end{enumerate}
\end{assumption}



\begin{assumption} \label{Assumption_2}
For $t \ge 0$, the target signal $y_d(t)$ and the performance function $\rho(t)$ and their corresponding partial derivatives are continuous, bounded and available. Moreover, the initial condition  $e(0)$ satisfies $-\underline{\delta }\rho (0)<e(0)<\bar{\delta }\rho (0) $.
\end{assumption}

\begin{assumption} \label{Assumption_3}
\cite{yong2020flexible} Under the given initial conditions, there exists a feasible control strategy   $u(t)$ to achieve the tracking control objective for a practical plant in the form of the system (\ref{Eq_2_1}) subject to the input saturation described by (\ref{Eq_2_2}) and bounded external disturbances.
\end{assumption}

\begin{remark} \label{remark_feasibility}
Notably, an open loop unstable system like (\ref{Eq_2_1}) can not be globally stabilized in the presence of the input saturation \cite{CWen11robust}.
The input saturation affects the feasibility of a control problem  mainly through the system dynamics  (the system nonlinearities). At the best case, the local stability results
are expected which depends on the growth properties of the system nonlinearities; or at the worst case, the  instability results do not admit any control scheme for tracking control.  For instance, consider the system $\dot{x}=2+x^2+(1+0.5\sin(x))\text{sat}(u)$ where the saturation function takes values within $[-1,1]$. No matter what the
input signal $u$ is, the state $x$ will tend to infinity even in finite time.  Moreover, many
practical plants always operate in some specified regions where they are controllable under the input saturation, as a result, Assumption \ref{Assumption_3} is reasonable.
\end{remark}

Below we provide two propositions regarding to the invariant set for the performance-constrained control.

%

\begin{proposition}   \label{Proposition_Tracking}
Consider the system in form of
\begin{equation} \label{Eq_2_3}
\dot{z}=A({{\bar{x}}_{n}})z+\omega
\end{equation}
where $z \in \mathbb{R}^n$ is the state vector, the system matrix $A({{\bar{x}}_{n}})$ is given by
$A({{\bar{x}}_{n}})={{A}_{0}}+{{A}_{g}}({{\bar{x}}_{n}})$
with the constant diagonal matrix ${{A}_{0}}=\text{diag}(-{{k}_{1}},\ldots ,-{{k}_{n}})$ ($k_i>0$, $i\in {{\mathbb{I}}_{1:n}}$) and the skew-symmetric matrix
\[{{A}_{g}}({{\bar{x}}_{n}})=\left[ \begin{matrix}
   0 & {{g}_{1}}({{x}_{1}}) & {} & {}  \\
   -{{g}_{1}}({{x}_{1}}) & 0 & {{g}_{2}}({{{\bar{x}}}_{2}}) & {}  \\
   {} & \ddots  & \ddots  & \ddots   \\
   {} & {} & -{{g}_{n-1}}({{{\bar{x}}}_{n-1}}) & 0  \\
\end{matrix} \right].\]
Define a Lyapunov function candidate $V={{z}^{T}}Pz$ with $P$ being a symmetric positive-definite matrix.
Given the performance function 
\begin{equation}  \label{Eq_2_6}
\rho (t)=({{\rho }_{0}}-{{\rho }_{\infty }}){\text{exp}({-\kappa t}})+{{\rho }_{\infty }}
\end{equation}
with $\rho_0=\rho(0)>\rho_\infty>0$, $\kappa>0$, there exists a performance constraint 
$\vert z_1\vert=\vert C z\vert \le \rho(t)$  with $C={{\left[ 1,0,\ldots 0 \right]}}$.
Let 
\begin{enumerate}[(i)]
\item  the following matrix inequalities are feasible  
\begin{equation} \label{Eq_2_4}
\begin{aligned}
& {{A}_{0}}X+XA_{0}^{T}+(\epsilon +\alpha +2\kappa)X+\alpha W\le 0, \\
& {{A}_{g}}X+XA_{g}^{T}- \epsilon X \le 0
\end{aligned}
\end{equation}
for $X={{X}^{T}}={{P}^{-1}}$, $W>0$, $\alpha>0$ and $\epsilon>0$;
\item  the function $\omega (t)$ satisfies the inequality ${{\omega }^{T}}W\omega \le 1 \le \Gamma_\infty$ where 
$\Gamma_\infty= \rho_{\infty}^2/(CXC^T) $;
\end{enumerate}
then the region of $\Omega=\left\{ z|V(t)\le \Gamma(t) \right\}$ with $\Gamma(t)= \rho(t)^2/(CXC^T) $ is an invariant set in which the performance constraint is guaranteed.
\end{proposition}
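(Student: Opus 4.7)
The plan is to split the proof into two independent pieces: (a) verify that every $z\in\Omega$ already satisfies the performance constraint $\vert z_1\vert\le\rho$, and (b) verify forward invariance of $\Omega$. For (a) I would apply a weighted Cauchy--Schwarz inequality to $|Cz|=|CP^{-1/2}\cdot P^{1/2}z|$ to obtain $z_1^2\le (CP^{-1}C^T)\,z^TPz=(CXC^T)V$, so that $V\le\Gamma=\rho^2/(CXC^T)$ immediately yields $|z_1|\le\rho$. This step uses only $P>0$ and the definition of $\Gamma$.

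For (b) I would run the standard barrier argument: show $\dot V\le\dot\Gamma$ whenever $V=\Gamma$, which precludes a first-crossing time of $\partial\Omega$. Differentiating $V=z^TPz$ along (\ref{Eq_2_3}) gives $\dot V=z^T(PA+A^TP)z+2z^TP\omega$. I would convert (\ref{Eq_2_4}) to the $P$-form by sandwiching both inequalities between two copies of $P=X^{-1}$, which yields $PA_0+A_0^TP+(\epsilon+\alpha+2\kappa)P+\alpha PWP\le 0$ and $PA_g+A_g^TP-\epsilon P\le 0$; adding these cancels the $\epsilon P$ contributions and furnishes $z^T(PA+A^TP)z\le -(\alpha+2\kappa)V-\alpha z^TPWPz$. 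The disturbance cross term $2z^TP\omega$ is then absorbed by $-\alpha z^TPWPz$ through Young's inequality (equivalently, a Schur complement) with the matrix weight $\alpha W$, producing a bound of the shape $\dot V+2\kappa V\le -\alpha V+\alpha^{-1}$ once the quadratic disturbance bound in (ii) is invoked.

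To close invariance I would compare the resulting inequality with the intrinsic dynamics of $\Gamma$. A direct computation together with $\dot\rho=-\kappa(\rho-\rho_\infty)$ gives $\dot\Gamma+2\kappa\Gamma=2\kappa\rho\rho_\infty/(CXC^T)\ge 2\kappa\Gamma_\infty$ since $\rho\ge\rho_\infty$. At any boundary instant $V=\Gamma$, subtracting the two inequalities and using $\Gamma_\infty\ge 1$ from (ii) to dominate the residual $\alpha^{-1}$ delivers $\dot V\le\dot\Gamma$, hence invariance by a standard comparison/contradiction argument applied at the first boundary-touching time.

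The delicate step, and the main obstacle I anticipate, is the bookkeeping in the Young/Schur absorption: the $\alpha W$ block in (\ref{Eq_2_4}) and the quadratic disturbance bound must be scaled so that the leftover after absorbing $2z^TP\omega$ is majorised by the $2\kappa\Gamma_\infty$ margin supplied by the performance-function dynamics, and the hypothesis $\Gamma_\infty\ge 1$ is precisely what enables this domination; any mismatch between $W$ and $W^{-1}$ in these two uses would break the comparison. The skew-symmetric structure of $A_g$, which is what makes the second LMI in (\ref{Eq_2_4}) feasible uniformly in $\bar x_n$, plays a supporting but essential role by cleanly separating the state-dependent part of the drift from the time-invariant part handled by the first LMI.
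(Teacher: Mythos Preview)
Your two-part plan coincides with the paper's proof. Part~(a) is the same Cauchy--Schwarz/distance-to-hyperplane computation. For part~(b) the paper likewise establishes invariance by a boundary Lyapunov comparison, but it packages the calculation slightly differently: it works with the ratio $M=V/\Gamma$, uses $\dot\rho/\rho\ge -\kappa$ to produce the extra $2\kappa P$ term (equivalent to your $\dot\Gamma+2\kappa\Gamma\ge 0$), and then writes the entire drift-plus-disturbance contribution as a single block quadratic form
\[
\dot M\le \tfrac{1}{\Gamma}\begin{bmatrix}z\\ \omega\end{bmatrix}^{T}H\begin{bmatrix}z\\ \omega\end{bmatrix}+\tfrac{\alpha}{\Gamma}\bigl(\omega^{T}W\omega-z^{T}Pz\bigr),\qquad
H=\begin{bmatrix}PA_{0}+A_{0}^{T}P+(\epsilon+\alpha+2\kappa)P & P\\ P & -\alpha W\end{bmatrix},
\]
declaring $H\le 0$ by Schur complement from~(\ref{Eq_2_4}) and concluding $\dot M\le 0$ on $\partial\Omega$ since $\omega^{T}W\omega\le 1\le\Gamma_\infty\le\Gamma=z^{T}Pz$.

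The one substantive difference is exactly the point you anticipate. By putting $-\alpha W$ in the lower-right block of $H$, the paper's residual appears as $\alpha\,\omega^{T}W\omega$, which matches hypothesis~(ii) directly and closes with no extra constants. Your Young's-inequality route, absorbing $2z^{T}P\omega$ against $-\alpha z^{T}PWPz$, yields instead the residual $\alpha^{-1}\omega^{T}W^{-1}\omega$; this is the $W$ versus $W^{-1}$ mismatch you flagged, and it prevents the stated bound $\dot V+2\kappa V\le -\alpha V+\alpha^{-1}$ from following from $\omega^{T}W\omega\le 1$ alone (and even granting it, the leftover $\alpha^{-1}$ is not dominated by $2\kappa\Gamma_\infty$ for all $\alpha>0$). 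Rewriting the absorption step in the block-matrix/Schur form, rather than as a scalar Young's inequality, is precisely how the paper sidesteps this bookkeeping issue.
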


\begin{proof}
See Appendix A.
\end{proof}

\begin{remark}
Looking into the influence of $\rho_\infty$ and $W$ on the size of $P$, one obtains
\begin{enumerate} [(i)]
\item the more stringent performance requirements (smaller $\rho_\infty$) bring about a larger $P$;
\item on the contrary, the larger external disturbance (smaller matrix $W$) brings about a smaller $P$.
\end{enumerate} 
Therefore, the choice of $\rho_\infty$ needs to be related to the size of the disturbance. Accordingly, if the existing disturbance is small enough to be ignored or the disturbance observer is perfect enough to eliminate the disturbance, $W$ can be regarded large enough and $\rho_\infty$ can be taken as small as possible.
\end{remark}

\begin{remark}
For the satisfaction of matrix inequalities in (\ref{Eq_2_4}), a trivial solution is to let $X=V_h I_n$ where $V_h$ is the value to be optimized and $I_n$ is a $n$-dimensional unit matrix such that the second inequality in (\ref{Eq_2_4}) is obviously established.
\end{remark}

\begin{proposition} \label{Proposition_Performance}
Consider the system
\begin{equation}  \label{Eq_2_5} 
\dot{z}=A({{\bar{x}}_{n}},\mu )z+{{D}_{\mu }}\omega. 
\end{equation}
where $z \in \mathbb{R}^n$ is the state vector, $\mu$ is a design parameter, the matrix ${{D}_{\mu }}=D_{\mu }^{1}$ is in form $D_{\mu }^{\alpha }=\text{diag}({{\mu }^{\alpha }},1,\ldots ,1)$ and the system matrix $A({{\bar{x}}_{n}},\mu )$ is given by
$A({{\bar{x}}_{n}},\mu )=D_{\mu }^{\nu }{{A}_{0}}+{{D}_{\mu }}{{A}_{g}}({{\bar{x}}_{n}}){{D}_{\mu }}$, $\nu>-1$.
Define a Lyapunov function candidate $V={{z}^{T}}Pz$ where $P=\text{blk}\ \text{diag}({{P}_{1}},{{P}_{2-n}})$ with ${{P}_{1}} \in \mathbb{R}$ and ${{P}_{2-n}} \in \mathbb{R}^{(n-1)\times (n-1)}$ being symmetric positive-definite matrices. Let 
\begin{enumerate} [(i)]
\item the matrix inequalities in (\ref{Eq_2_4}) are feasible 
for $X={{X}^{T}}$, $X=\text{block}\ \text{diag}({{X}_{1}},{{X}_{2-n}})$, $P={{X}^{-1}}$ (both $X_1$ and ${{X}_{2-n}}$ are symmetric definite matrices), $W>0$, $\alpha>0$ and arbitrary small $\epsilon>0$;
\item the disturbance function $\omega (t)$ satisfies the inequality ${{\omega }^{T}}D_{\mu }^{1-\nu/2 }WD_{\mu }^{1-\nu/2 }\omega \le 1$, 
\end{enumerate}
then the region of $\Omega=\left\{ z|\Phi={{z}^{T}}D_{\mu }^{{\nu }/{2}\;}PD_{\mu }^{{\nu }/{2}\;}z\le 1 \right\}$ is an invariant set.
\end{proposition}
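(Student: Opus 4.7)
I would verify forward invariance of $\Omega$ by checking that $\dot{\Phi} \le 0$ on the boundary $\{\Phi = 1\}$, which is sufficient by Nagumo's theorem. The key structural fact to exploit is that $P = \text{blk diag}(P_1, P_{2-n})$ shares its block structure with every $D_\mu^{\alpha} = \text{blk diag}(\mu^{\alpha}, I_{n-1})$, so $P$ and $D_\mu^{\alpha}$ commute; similarly, the diagonal $A_0$ commutes with every $D_\mu^{\alpha}$. Consequently $\tilde{P} := D_\mu^{\nu/2} P D_\mu^{\nu/2} = D_\mu^{\nu} P$, and by a direct expansion using $A = D_\mu^{\nu} A_0 + D_\mu A_g D_\mu$, the drift matrix splits cleanly as
\[
\tilde{P} A + A^{T} \tilde{P} = D_\mu^{2\nu}\bigl(P A_0 + A_0^{T} P\bigr) + D_\mu^{\nu+1}\bigl(P A_g + A_g^{T} P\bigr) D_\mu.
\]

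Next I would bound the two summands using the matrix inequalities in (\ref{Eq_2_4}). Since $D_\mu^{2\nu}$ commutes with $P$, $A_0$, and (assuming $W$ inherits the block structure) with $PWP$, left-multiplying the first inequality in (\ref{Eq_2_4}) by $D_\mu^{2\nu}$ preserves it, so the first summand is dominated by $-(\epsilon+\alpha+2\kappa) D_\mu^{2\nu} P - \alpha D_\mu^{2\nu} P W P$. The second summand is controlled by the second inequality in (\ref{Eq_2_4}), and the freedom to choose $\epsilon$ arbitrarily small renders its contribution asymptotically negligible on the compact set $\{\Phi \le 1\}$. For the disturbance term, I would choose a weighted Young's inequality of the form
\[
2 z^{T} \tilde{P} D_\mu \omega \le \alpha\, z^{T} D_\mu^{2\nu} P W P z + \tfrac{1}{\alpha}\, \omega^{T} D_\mu^{1-\nu/2} W D_\mu^{1-\nu/2}\omega,
\]
so that its state-side cancels the $-\alpha PWP$ contribution above and its noise-side is at most $1/\alpha$ by the disturbance hypothesis.

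Combining everything yields $\dot\Phi \le -(\alpha+2\kappa) z^{T} D_\mu^{2\nu} P z + 1/\alpha + O(\epsilon)$. A short commutation check shows that $z^{T} D_\mu^{2\nu} P z$ is comparable to $\Phi$ up to a $\mu$-dependent factor, and an appropriate choice of $\alpha$ then forces the right-hand side to be non-positive on $\{\Phi = 1\}$. I expect the main obstacle to be the cross term $D_\mu^{\nu+1}(PA_g + A_g^{T} P) D_\mu$: for $\nu \ne 0$ this is not a pure congruence of $PA_g + A_g^{T} P$, so the second inequality in (\ref{Eq_2_4}) cannot be applied as a direct congruence, and one must instead rely on the arbitrary-smallness clause on $\epsilon$ together with a compactness argument on the sublevel set. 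A secondary subtlety is verifying that the Young's weighting produces precisely the expression $\omega^{T} D_\mu^{1-\nu/2} W D_\mu^{1-\nu/2}\omega$ appearing in the hypothesis while simultaneously generating the $\alpha D_\mu^{2\nu} P W P$ term needed to cancel the contribution of inequality (i).
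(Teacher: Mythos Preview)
The paper's argument is organized differently from yours. Rather than differentiating $\Phi$, splitting the drift into an $A_0$ part and an $A_g$ part, and applying a separate Young's inequality, the paper writes the derivative in one shot as
\[
\begin{bmatrix} D_\mu^{\nu/2}z \\ D_\mu^{1-\nu/2}\omega \end{bmatrix}^{T} H \begin{bmatrix} D_\mu^{\nu/2}z \\ D_\mu^{1-\nu/2}\omega \end{bmatrix}
+\alpha\Bigl(\omega^{T}D_\mu^{1-\nu/2}WD_\mu^{1-\nu/2}\omega - z^{T}D_\mu^{\nu/2}PD_\mu^{\nu/2}z\Bigr),
\]
with $H$ the \emph{same} $2\times2$ block matrix as in Proposition~1. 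The first LMI in (\ref{Eq_2_4}) together with the Schur complement gives $H\le 0$; on the boundary $\Phi=1$ the correction term is nonpositive by the disturbance hypothesis, and invariance follows. The point of the particular stacking $(D_\mu^{\nu/2}z,\,D_\mu^{1-\nu/2}\omega)$ is that the off-diagonal block $P$ of $H$ reproduces exactly the physical cross term $2z^{T}PD_\mu\omega$ (since $D_\mu^{\nu/2}PD_\mu^{1-\nu/2}=PD_\mu$ by commutation), while the lower-right block $-\alpha W$ reproduces exactly the disturbance expression in hypothesis (ii). This packaging replaces your separate Young's step and your compactness/``comparable up to a $\mu$-factor'' argument in one stroke.

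Your route has two concrete problems. First, the expansion you write for the skew part is wrong: with $\tilde P=D_\mu^{\nu}P$ one gets
\[
\tilde P A+A^{T}\tilde P \;=\; D_\mu^{2\nu}\bigl(PA_0+A_0^{T}P\bigr)\;+\;D_\mu^{\nu+1}PA_gD_\mu+D_\mu A_g^{T}PD_\mu^{\nu+1},
\]
and the last two terms equal $D_\mu^{\nu+1}(PA_g+A_g^{T}P)D_\mu$ only when $\nu=0$; you state the identity first and only later flag it as an ``obstacle''. Second, your Young's inequality does not deliver the disturbance term in the hypothesis: splitting $2z^{T}D_\mu^{\nu}PD_\mu\omega$ in the natural way yields $\alpha\,z^{T}D_\mu^{2\nu}PWPz+\alpha^{-1}\omega^{T}D_\mu W^{-1}D_\mu\omega$, not $\alpha^{-1}\omega^{T}D_\mu^{1-\nu/2}WD_\mu^{1-\nu/2}\omega$. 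The paper's choice of augmented coordinates is exactly what makes the cross term and the assumed disturbance bound align; without that device your bookkeeping does not close. Finally, note that $\alpha$ is a parameter fixed by the LMI (\ref{Eq_2_4}), not a knob you may retune at the end.
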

 
\begin{proof}
See Appendix B.
\end{proof} 


\section{Balanced command filtered backstepping}

\subsection{Design equations}
In the balanced command filtered backstepping (BCFB) approach, we view the state variable ${{x}_{i+1}}$ as the control input to the ${{x}_{i}}$-subsystem and design a stabilizing ${{x}_{i,d}}$ which would achieve the control objective.  At step $i\in {{\mathbb{I}}_{2:n}} $ of the backstepping procedure, a command filter will be used to get the differential estimation of a known signal without analytic computation, which is defined by
\begin{equation}  
{{\tau }_{i}}{{\dot{x}}_{i,c}}+{{x}_{i,c}}={{x}_{i,d}}, \ \ x_{i,c}(0)=x_{i,d}(0),\ \ i\in {{\mathbb{I}}_{2:n}}. \\ 
\end{equation}
where $\tau_i$  is a positive design parameter, $x_{i,c}$ represents the state of
the command filter.  

To compensate the filtering errors ${{x}_{i,e}}=x_{i,c}-x_{i,d},\ \ i\in {{\mathbb{I}}_{2:n}} $ and the influence of saturation $\Delta u=\text{sat}(u)-u$, the following auxiliary system is constructed  
\begin{equation}\label{Eq_3_2_2}
\begin{aligned}
  & {{{\dot{\eta }}}_{1}}=-f_p(z){{k}_{1}}{{\eta }_{1}}+(1-f_p(z)){{g}_{1}}({{x}_{1}}) \left( {{\eta }_{2}}+{{x}_{2,e}}\right), \\ 
 & {{{\dot{\eta }}}_{i}}=-{{k}_{i}}{{\eta }_{i}}+{{g}_{i}}({{{\bar{x}}}_{i}})({{\eta }_{i+1}}+{{x}_{i+1,e}}),\ \ i\in {{\mathbb{I}}_{2:n-1}} \\ 
 & {{{\dot{\eta }}}_{n}}=-{{k}_{n}}{{\eta }_{n}}+{{g}_{n}}({{{\bar{x}}}_{n}})\Delta u  
\end{aligned}
\end{equation}
where  ${{{\eta}}}={{\left[\eta_{1},\eta_{2},\ldots, \eta_n \right]}^{T}}$, the initial condition is $\eta_i(0)=0$, $i\in {{\mathbb{I}}_{1:n}}$ and $k_i>0, {{\mathbb{I}}_{1:n}}$ are designed control parameters. The $f_p(z) \in [0,1]$ is a performance safety evaluation (PSE) function  whose detailed expression will be presented later.

On this basis, define the tracking error ${{s}_{i}}$ and the compensated tracking error  ${{z}_{i}}$ as 
\begin{equation} \label{Eq_3_2_1}
\begin{aligned}
&{{s}_{1}}={{x}_{1}}-{{y}_{d}}, 
 \ \ \ \ \ \ \ \   {{s}_{i}}={{x}_{i}}-{{x}_{i,c}},\ \ \ \ \ \ \ \ i\in {{\mathbb{I}}_{2:n}}  \\
 & {{z}_{1}}={{x}_{1}}-{{y}_{d}}-{{\eta }_{1}},   
  \ \ {{z}_{i}}={{x}_{i}}-{{x}_{i,c}}-{{\eta }_{i}},\ \ i\in {{\mathbb{I}}_{2:n}}. 
\end{aligned} 
\end{equation} 
The notations given  by  ${{{s}}}={{\left[s_{1},s_{2},\ldots, s_n \right]}^{T}}$,  ${{{z}}}={{\left[z_{1},z_{2},\ldots, z_{n} \right]}^{T}}$ are defined.

One can design the stabilizing function  
\begin{align} \label{Eq_3_2_4}
  & {{x}_{2,d}}=g_{1}^{{-1}}({{x}_{1}})\left( -{{k}_{1}}{{z}_{1}}-{{f}_{1}}({{x}_{1}})+{{{\dot{y}}}_{d}} \right), \\ 
 & {{x}_{i+1,d}}=g_{i}^{{-1}}({{{\bar{x}}}_{i}})\left( -{{k}_{i}}{{s}_{i}}-{{f}_{i}}({{{\bar{x}}}_{i}})-{{g}_{i-1}}({{{\bar{x}}}_{i-1}}){{z}_{i-1}}+{{{\dot{x}}}_{i,c}} \right),   \notag \\ 
 & u=g_{n}^{{-1}}({{{\bar{x}}}_{n}})\left( -{{k}_{n}}{{s}_{n}}-{{f}_{n}}({{{\bar{x}}}_{n}})-{{g}_{n-1}}({{{\bar{x}}}_{n-1}}){{z}_{n-1}}+{{{\dot{x}}}_{n,c}} \right) \notag 
\end{align}
where $ i\in {{\mathbb{I}}_{2:n-1}}$. Therefore, through the substitution of (\ref{Eq_3_2_4}) into (\ref{Eq_2_1}), the system model gets evolved into 
\begin{align} \label{Eq_3_2_5}
  & {{{\dot{z}}}_{1}}=-{{k}_{1}}{{z}_{1}}+{{g}_{1}}({{x}_{1}}){{z}_{2}}  \nonumber \\
  & \ \ \ \ \ \ +f_p(z)\left( {{k}_{1}}{{\eta }_{1}}+{{g}_{1}}({{x}_{1}})({{\eta }_{2}}+{{x}_{2,e}} )\right)+{{\omega }_{1}}, \nonumber  \\ 
 & {{{\dot{z}}}_{i}}=-{{k}_{i}}{{z}_{i}}+{{g}_{i}}({{{\bar{x}}}_{i}}){{z}_{i+1}}-{{g}_{i-1}}({{{\bar{x}}}_{i-1}}){{z}_{i-1}}+{{\omega }_{i}},  i\in {{\mathbb{I}}_{2:n-1}}\ \nonumber  \\ 
 & {{{\dot{z}}}_{n}}=-{{k}_{n}}{{z}_{n}}-{{g}_{n-1}}({{{\bar{x}}}_{n-1}}){{z}_{n-1}}+{{\omega }_{n}}  
\end{align}
which is further rewritten into the following compact form
\begin{equation}\label{Eq_3_2_6}
\dot{z}=A({{\bar{x}}_{n}})z+f_p(z){{B}_{1}}\left( {{k}_{1}}{{\eta }_{1}}+{{g}_{1}}({{x}_{1}})({{\eta }_{2}}+{{x}_{2,e}}) \right)+\omega 
\end{equation} 
where $B_1={{\left[ 1,0,\ldots 0 \right]}^{T}}$ and the system matrix $A({{\bar{x}}_{n}})$ is given in Proposition \ref{Proposition_Tracking}.  


\textcolor{black}{In the following, the balance between performance and saturation is to be discussed with the performance constraint defined as }

\emph{Performance constraint:}
with respect to the tracking performance of $z_1$, similar to the prescribed performance control, define the prescribed performance function (PPF) given in
(\ref{Eq_2_6}). The constant $\kappa$ represents the decreasing rate, a lower bound on the required speed of convergence.  The performance constraint is thus described by $\vert z_1\vert=\vert C z\vert \le \rho(t)$. 
Furthermore, the convergence rate $\kappa$  is subject to $0<\kappa \le k_1$ since the ideal convergence rate of $z_1$ equals to $k_1$ in the  situation $\dot{z}_1=-k_1z_1$.

Here two compact sets are defined and their relationship is illustrated in Fig. \ref{fig:Fig_explain}.

\begin{figure}[!t]
\centerline{\includegraphics[width=0.33\textwidth]{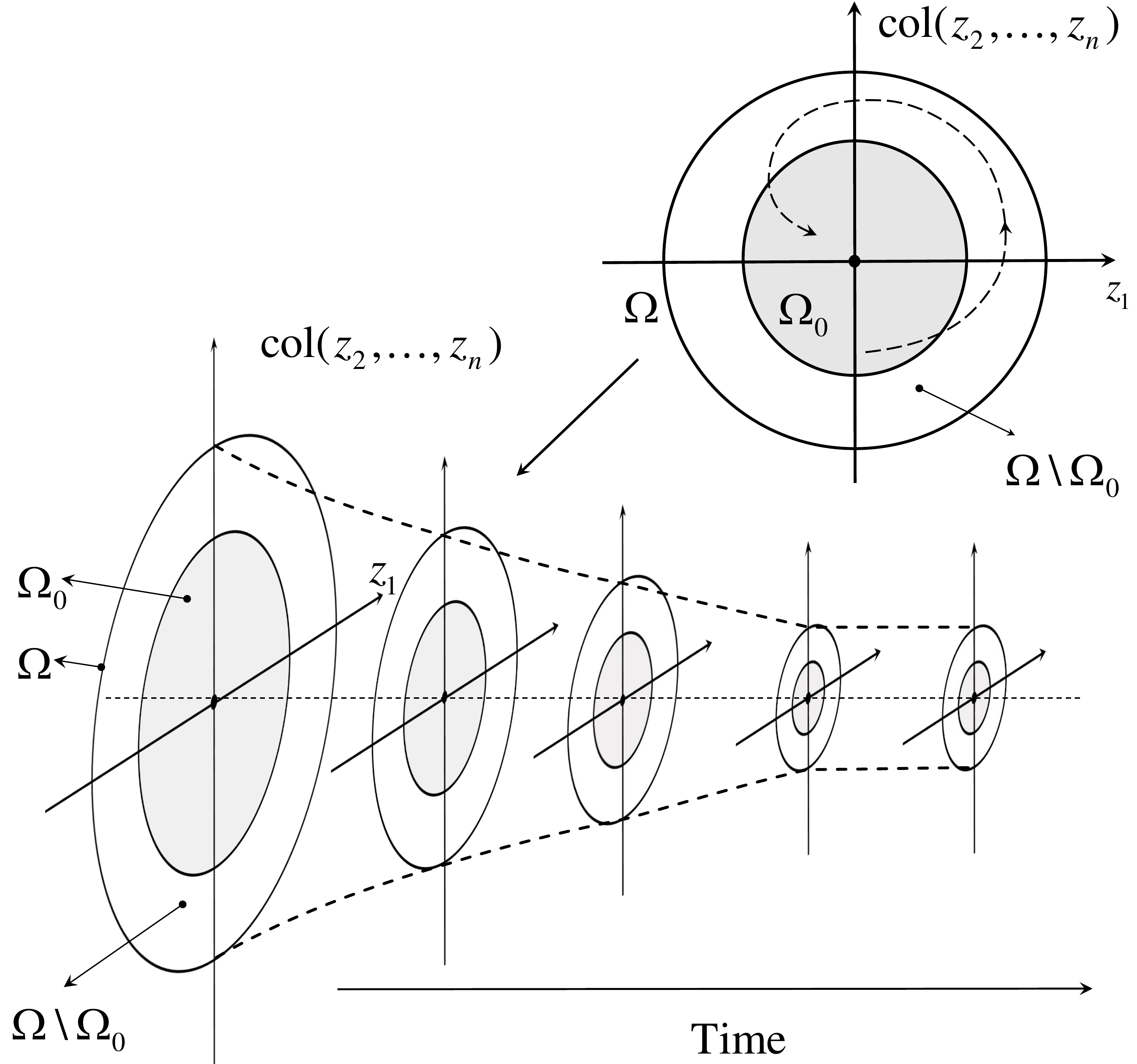}}
\caption{Illustration of invariant set $\Omega$, safety region $\Omega_0$.}
\label{fig:Fig_explain}
\end{figure} 

\emph{User-defined invariant set}: define the region $\Omega =\left\{ z|V(t)={{z}^{T}}Pz\le \Gamma(t)\right\}$, $P>0$, as an invariant set such that once a trajectory of the system enters $\Omega $, it will never leave it again. Meanwhile, the performance constraint $\vert C z\vert \le \rho(t)$ gets satisfied through the time-varying $\Gamma(t)$ whose detailed expression can refer to Proposition \ref{Proposition_Tracking}.   

\emph{User-defined safety region}: define the region ${{\Omega }_{0}}=\left\{ z|V(t)\le {{\Gamma}_{0}}(t) =\sigma \Gamma(t) \right\}$, $0<\sigma<1$, as a safety region within which the system performance is evaluated safe, and the destabilizing effect from the DEVs, like $x_{i,e}$, $i\in {{\mathbb{I}}_{2:n}}$ and $\Delta u$, are affordable. The user-defined variable $\sigma$ determines the safety region $\Omega_0$'s size.

The specific impact of DEVs on performance can be viewed from the following two perspectives
\begin{enumerate} [(i)]
\item An ideal situation is that even if saturation occurs, the system has the ability to self-regulate and will not let the error state run out of the region $\Omega_0$. Therefore, there is no need to deal with input saturation in this case, corresponding to $f_p=1$; 
\item For some systems with insufficient self-regulation ability, the input saturation makes the error state escape the region $\Omega_0$, which further causes system divergence. Therefore, some mandatory processing is required for the escaping behavior, corresponding to $f_p=0$.
\end{enumerate}
Consequently, the balance between the performance and saturation constraints can be directly achieved through the choice of PSE $f_p(z)$ given as
\begin{equation} \label{Eq_3_2_6b}
f_{p}(z)=\left\{ \begin{matrix}
\begin{aligned}
&  0,\ \ \ \ \ \ \ \ \ \ \ \ \ \ \ \ \ \ \ \ \ \ \  \text{if}\ V \ge \Gamma  \\
&  {(\Gamma -V)}/{(\Gamma -{{\Gamma}_{0}})},\ \   \text{if}\ \Gamma_{0}<V<\Gamma  \\
&  1,\ \ \ \ \ \ \ \ \ \ \ \ \ \ \ \ \ \ \ \ \ \ \  \text{if}\ V\le \Gamma_{0} \\
\end{aligned}
\end{matrix} \right.
\end{equation}

\subsection{Stability analysis}

\begin{theorem} \label{Theorem_BCF}
Consider the nonlinear system in form of (\ref{Eq_2_1}) under Assumptions \ref{Assumption_1}-\ref{Assumption_3}. With the auxiliary system (\ref{Eq_3_2_2}), the control law is designed according to (\ref{Eq_3_2_4}). The PPF is given in (\ref{Eq_2_6}). The
PSE $f_p(z)$ is specified as  (\ref{Eq_3_2_6b}).  Then, for the initial condition $s(0)=z(0) \in \Omega$, we have the following conditions: (i) the compensated tracking error $z$ is always contained in the invariant set $\Omega$ and satisfies the performance constraint $\vert z_1(t) \vert \le \rho(t)$; (ii) in the safety region $z \in \Omega_0$, the tracking error $s_1$ and the compensated tracking error $z_1$ are close to the same, despite of the DEVs; (iii) the tracking error $s_1$ converges into a sufficiently small neighborhood of the origin and all signals in the closed-loop system are bounded.
\end{theorem}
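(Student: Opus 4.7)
The plan is to prove claims (i)--(iii) in order, leveraging Proposition \ref{Proposition_Tracking} and the structural design of the PSE function $f_p(z)$.

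For claim (i), I would observe that $f_p(z)=0$ whenever $V\ge \Gamma$, so on and outside $\partial\Omega$ the closed-loop dynamics (\ref{Eq_3_2_6}) collapse to exactly the nominal form $\dot{z}=A(\bar{x}_n)z+\omega$ of (\ref{Eq_2_3}). The DEV-driven term $f_p\,B_1(k_1\eta_1+g_1(x_1)(\eta_2+x_{2,e}))$ is annihilated precisely on the boundary of the candidate invariant set. Applying Proposition \ref{Proposition_Tracking} (using the feasibility of (\ref{Eq_2_4}) and the disturbance bound $\omega^T W\omega\le 1$) yields $\dot V - \dot\Gamma \le 0$ on $\partial\Omega$, so a trajectory initialized in $\Omega$ cannot escape; the conversion $\Gamma(t)=\rho^2(t)/(CXC^T)$ combined with the Schur-type bound $z_1^2\le (CXC^T)\,V$ from the proof of Proposition \ref{Proposition_Tracking} then gives $|z_1(t)|\le\rho(t)$.

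For claim (ii), inside $\Omega_0$ the definition (\ref{Eq_3_2_6b}) forces $f_p(z)\equiv 1$, so the first row of the auxiliary system (\ref{Eq_3_2_2}) reduces to $\dot\eta_1=-k_1\eta_1$ and the $(1-f_p)$ injection channel vanishes. With the stipulated initialization $\eta_1(0)=0$ (consistent with $s(0)=z(0)$), we obtain $\eta_1(t)\equiv 0$ on any time interval the trajectory remains in $\Omega_0$, and $\eta_1$ is exponentially driven back to zero on any re-entry. Since (\ref{Eq_3_2_1}) gives $s_1=z_1+\eta_1$, this immediately yields $s_1\approx z_1$ inside the safety region, as claimed.

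For claim (iii), I would combine (i) with an ISS argument on (\ref{Eq_3_2_2}). First, (i) provides $|z_1(t)|\le\rho(t)\to\rho_\infty$, so $z_1$ enters an arbitrarily prescribed residual set. Next, using $V_\eta=\tfrac12\eta^T\eta$, the skew-symmetric cross terms $g_i\eta_i\eta_{i+1}$ telescope, leaving $\dot V_\eta \le -\underline k\,V_\eta + c_1\|x_e\|+c_2|\Delta u|$, where the filter errors $x_{i,e}$ are bounded by standard first-order filter estimates (since $x_{i,d}$ and $\dot x_{i,d}$ are bounded inductively via (\ref{Eq_3_2_4}) and the boundedness of $z$), and $|\Delta u|$ is bounded by $|u|+\max(|u_{\min}|,u_{\max})$ with $u$ bounded by the same induction. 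Hence $\eta$ is bounded, $s=z+\eta+(\text{filter/reference signals})$ is bounded, and $x_i$, $x_{i,c}$, $x_{i,d}$, $u$ follow by induction down the backstepping chain. Convergence of $s_1$ to a small neighborhood of the origin is then the combination of $|z_1|\to\rho_\infty$ and the exponential decay of $\eta_1$ in $\Omega_0$ established in (ii).

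The principal obstacle is the state-dependent coupling introduced by $f_p(z)$ in the transition region $\Omega\setminus\Omega_0$: there $0<f_p<1$, so $z$ and $\eta$ are not in simple cascade, and $\eta_1$ is actively driven by $(1-f_p)g_1(x_1)(\eta_2+x_{2,e})$ while $z$ is being pulled back toward $\Omega_0$ by the $f_p$ compensation term. A naive cascade bound therefore does not close. I expect the cleanest route is a composite Lyapunov function $V_c=V+\lambda V_\eta$ for small $\lambda>0$, exploiting the skew-symmetric structure of the cross coupling so that the $f_p$-weighted indefinite terms from $\dot V$ and $\dot V_\eta$ cancel, leaving a definite negative leading term plus bounded DEV perturbations. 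A secondary technical point is handling the non-smooth switch of $f_p$ on the two thresholds $V=\Gamma_0$ and $V=\Gamma$, which I would resolve by noting that the piecewise definition (\ref{Eq_3_2_6b}) is continuous, so standard Lipschitz-based invariance arguments apply.
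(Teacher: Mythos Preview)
Your treatment of claims (i) and (ii) matches the paper's proof almost exactly: the paper also argues that on $\partial\Omega$ the PSE forces $f_p=0$, collapsing (\ref{Eq_3_2_6}) to (\ref{Eq_2_3}) so that Proposition~\ref{Proposition_Tracking} applies, and that inside $\Omega_0$ the first auxiliary equation reduces to $\dot\eta_1=-k_1\eta_1$, yielding $s_1\to z_1$. Your identification of the transition region $\Omega\setminus\Omega_0$ as the delicate case is in fact more explicit than the paper, which handles it only heuristically (arguing that $f_p\to 0$ near $\partial\Omega$ prevents escape, and invoking Assumption~\ref{Assumption_3} for eventual re-entry into $\Omega_0$).

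There is, however, a concrete slip in your boundedness argument for (iii). The auxiliary system (\ref{Eq_3_2_2}) is \emph{lower-triangular}, not skew-symmetric: $\dot\eta_i$ contains $+g_i\eta_{i+1}$ but $\dot\eta_{i+1}$ contains no matching $-g_i\eta_i$. Hence the cross terms $g_i\eta_i\eta_{i+1}$ in $\dot V_\eta$ do not telescope or cancel; they must be absorbed by Young's inequality at the cost of enlarging the required gains $k_i$. The same issue undermines your proposed composite $V_c=V+\lambda V_\eta$ in the transition region: there is no skew-symmetric cross coupling between $z$ and $\eta$ to exploit, since the $f_p$-weighted injection in (\ref{Eq_3_2_6}) is one-directional. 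The paper sidesteps this by choosing $V_s=\tfrac12 z_1^2+\tfrac12\sum_{i=2}^n(z_i^2+\eta_i^2)$, deliberately \emph{excluding} $\eta_1$ (whose dynamics carry the awkward $f_p$ switch), and then bounding all cross terms via Young's inequality to obtain $\dot V_s\le -k_s V_s+\sum x_{i,e}^2+\delta_\omega^2+\Delta u^2$ with an $f_p$-dependent rate $k_s$. Your ISS route is viable once the telescoping claim is replaced by Young-type bounds, but the cancellation mechanism you describe does not exist.
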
 
\begin{proof}
If, as we said in Theorem \ref{Theorem_BCF} (ii), $s_1$ and $z_1$ tend to be the same in $\Omega_0$, then we can focus on the convergence of $z$ beyond $\Omega_0$ and the approximate convergence of $s_1$ (approximately $z_1$)  when $z$ returns within $\Omega_0$. As a consequence, choose the Lyapunov function candidate as $V_s=0.5 z_1^2+0.5\sum\nolimits_{i=2}^{n}{(z_{i}^{2}+\eta_{i}^{2})}$ whose time derivative yields
\begin{equation}
\dot{V}_s\le -k_s V_s +\sum\nolimits_{i=1}^{n} { x_{i,e}^{2} }+\delta _{\omega }^{2}+\Delta u^2
\end{equation}
 where $k_s =\min \left\{ 2\bar{k}_i,i\in {{\mathbb{I}}_{1:n}}  \right\}$, $\bar{k}_{1}=(1-0.5f_p)k_1-0.5 f_p g_1^2-0.25$, $\bar{k}_i=k_i-0.5g_i^2-1$, $ i\in {{\mathbb{I}}_{2:n-1}} $, $\bar{k}_n=k_n-0.25g_n^2-1$ and  $ \Vert \omega(t) \Vert \le \delta_{\omega}^2$.  
The stability of $V_s$ can only be obtained by guaranteeing the signum of $k_s$ and exploiting the boundedness of filter errors ${{x}_{i,e}}$, $ i\in {{\mathbb{I}}_{2:n}} $ and saturation quantity $\Delta u$. Based  on the proof of Theorem 2 in \cite{farrell2009command}, the boundedness of filter error ${{x}_{i,e}}$ can be obtained since the function $x_{i,d}$, its first partial derivatives with respect to $(s_i,\eta_i,x_{i,c})$ and its time derivative $\dot{x}_{i,d}$ are continuous and bounded under 
Assumption \ref{Assumption_1}-\ref{Assumption_2}. With regard to the boundedness of saturation quantity $\Delta u$, a commonly used approach is to show that the various quantities participating in $u(t)$ are bounded, so the saturation $\Delta u$ does not make a large quantity. 

The remaining proof is divided into two parts: (i) the invariance of $z$ in the set $\Omega$; (ii) the tracking performance in the safety region $\Omega_0$ including the consistency of $s_1$ and $z_1$.

\emph{Case 1.} $z \notin \Omega $ which brings $f_{p}=0$. In such case, the evaluation of the system is in a dangerous state, that is, not in the user defined invariant set and the system model (\ref{Eq_3_2_6}) is converted into a simplified form
$\dot{z}=A({{\bar{x}}_{n}})z+\omega$. According to Proposition \ref{Proposition_Tracking}, the invariance of $z$ in the set $\Omega$ is obvious established. Furthermore, it is easy to  make $k_s>0$ due to $f_p=0$ such that the stability of $V_s$ is achieved.

\emph{Case 2.} $z \in \Omega_0 $ which brings $f_p=1$. The corresponding composite system dynamics are obtained as
\begin{equation}\label{Eq_3_2_10}
\begin{aligned}
  & \dot{z}=A({{{\bar{x}}}_{n}})z+{{B}_{1}}\left( {{k}_{1}}{{\eta }_{1}}+{{g}_{1}}({{x}_{1}})({{\eta }_{2}}+ {{x}_{2,e}}) \right)+\omega,  \\ 
 & {{{\dot{\eta }}}_{1}}=-{{k}_{1}}{{\eta }_{1}}. \\ 
\end{aligned}
\end{equation}
It is clear that ${{\eta }_{1}}$ converges to zeros over time and ${{z}_{1}}={{s}_{1}}=y-{{y}_{d}}$, which is the control target we really want. 
\begin{enumerate} [(i)]
\item  Under the premise of ${{\Omega }_{0}}$,  when saturation does not occur, we can get that ${{\eta }_{i}}$, $i\in {{\mathbb{I}}_{2:n}}$ are bounded and of small quantity, and the system is fully robust to these destabilizing effects of filter errors. 

\item  Consider the existence of input saturation in the safety region. 
One optimistic situation is  a certain degree of saturation would not have or have little negative impact on performance, such that the system remains in the safety region.
The opposite is that the saturation would drive the trajectory of the system out of the safety region ${{\Omega }_{0}}$, stay in the region  $\Omega \backslash {{\Omega }_{0}}$ and never run out of $\Omega $ due to the PSE $f_p$ decreasing to 0 as the trajectory close to the boundary of $\Omega $. In light of Assumption \ref{Assumption_3}, when the saturation disappears or the system's adjustment ability has tolerance to saturation, the system trajectory will return to the safe area. 
\end{enumerate}

From the above analysis, it is thus concluded that the conflict between performance and saturation gets balanced. 
\end{proof}

\begin{remark} \label{Remark_CFB}
In the command filtered backstepping (CFB), the auxiliary system is governed by \begin{align}  \label{Eq_3_2_11}
  & {{{\dot{\eta }}}_{i}}=-{{k}_{i}}{{\eta }_{i}}+{{g}_{i}}({{{\bar{x}}}_{i}})({{\eta }_{i+1}}+  {{x}_{i+1,e}}),\ \ i\in {{\mathbb{I}}_{1:n-1}}  \nonumber \\ 
 & {{{\dot{\eta }}}_{n}}=-{{k}_{n}}{{\eta }_{n}}+{{g}_{n}}({{{\bar{x}}}_{n}})\Delta u  
\end{align}  
and the corresponding virtual controller ${{x}_{i+1,d}}$,  $ i\in {{\mathbb{I}}_{2:n-1}}$,  and controller $u$ are the same to (\ref{Eq_3_2_4}) but
\begin{align}  \label{Eq_3_2_12}
{{x}_{2,d}}=g_{1}^{{-1}}({{x}_{1}})\left( -{{k}_{1}}{{s}_{1}}-{{f}_{1}}({{x}_{1}})+{{{\dot{y}}}_{r}} \right).
\end{align}
The auxiliary signal ${{\eta }_{i}}$ actually serves as a collection and transmission of negative effects from filter error ${{x}_{i,e}}$ and saturation $\Delta u$. 
The subtraction between ${{s}_{i}}$ and ${{\eta }_{i}}$ in the supplementary error ${{z}_{i}}$  is actually to delete these effects. 
\end{remark}

\begin{remark}
Here we compare the proposed balanced command filtered backstepping control scheme with the existing CFB methods from the following aspects: 
\begin{enumerate} [(i)]
\item To our best knowledge, this is the first time that the performance constraint on the command filtered backstepping method has been proposed without the operation of error transformation like PPC;
\item In general, the PSE $f_p(z)$ determines the role played by $\eta_1$ which collects the cumulative impact of DEVs: $f_p(z)=1$, $\eta_1$ is responsible for unifying the two errors  $s_1$ and $z_1$ to the same; $f_p(z)=0$, $\eta_1$ is responsible for compensating DEVs such that the tracking error $s(t)$ returns to the invariant set $\Omega$ of performance constraints. 
\end{enumerate}
\end{remark}

\section{Balanced performance control}
Inspired by PPC, if the performance constraint is imposed on the tracking error $e(t)$, the performance-constrained error gets converted into an equivalent unconstrained variable through the error transform function (ETF). 
Given the performance constraint described by  
$-\underline{\delta }\rho (t)<e(t)<\bar{\delta }\rho (t)$,
following the ETF, we have
$e(t)=\rho (t)T({{z}_{1}})$
where ${{z}_{1}}(t)$ denotes the transformed error and $T({{z}_{1}})$ is a smooth, strictly monotonic increasing function. 
If ${{z}_{1}}(t)$ is bounded, the performance constraint of $e(t)$ is obvious satisfied.

The specific ETF $T(z_1)$ can be expressed as
	\[T(z_1)=\frac{e}{\rho}=\frac{\bar{\delta }\exp ({{z}_{1}})-\underline{\delta }\exp (-{{z}_{1}})}{\exp ({{z}_{1}})+\exp (-{{z}_{1}})}.\] 	 	  
Differentiating $z_1$ with respect to time, we have
\begin{equation}
\dot{z}_{1}=\mu \left( e,\rho  \right)\left(\dot{e}-\dot{\rho }({e}/{\rho })\right) 	
\end{equation}
with $\mu \left( e,\rho  \right)=\lambda \left( e,\rho  \right) /{\rho (t)}$ and 
$\lambda \left( e,\rho  \right)=\frac{\partial T_{\text{inv}}}{\partial \left( {e(t)}/{\rho (t)}\; \right)} $ where  $T_{\text{inv}}(\cdot)$ is the inverse function of $T(\cdot)$.

\begin{remark}
In a prescribed performance function, a larger decreasing rate of $\rho(t)$ increases the possibility of a singularity problem. In particular, if the tracking error $e(t)$ may get closer to the constraint performance, i.e., ${e(t)}/{\rho (t)}\;\to \bar{\delta },-\underline{\delta }$, then $\lambda \left( e,\rho  \right)\to \pm \infty $, which gives rise to the input saturation problem and to a violation of the prescribed constraint conditions.
\end{remark}

\subsection{Design equations}
Following the framework of command filtered backstepping, let us define the new error coordinates
\begin{equation} \label{Eq_4_3}
\begin{aligned}
  & {{z}_{1}}={{s}_{1}}={{T}^{-1}}\left( {e}/{\rho }\; \right),  \ \ \ {{s}_{i}}={{x}_{i}}-{{x}_{i,c}},\ \ i\in {{\mathbb{I}}_{2:n}}  \\ 
 & {{z}_{i}}={{x}_{i}}-{{x}_{i,c}}-{{\eta }_{i}},\ \ i\in {{\mathbb{I}}_{2:n}} \\ 
\end{aligned}
\end{equation}
and the auxiliary system for the compensation of DEVs is governed by
\begin{align} \label{Eq_4_4}
 & {{{\dot{\rho }}}}= -f_p(z,\mu){{k}_{\rho}}(\rho -{{\rho }_{\infty }}) \nonumber \\
 & \ \ \ \ \ +(1-f_p(z,\mu))f_t(e,\rho)({\rho}/{e}){{g}_{1}}({{x}_{1}})\left( {{\eta }_{2}}+{{x}_{2,e}} \right),   \nonumber \\ 
 & {{{\dot{\eta }}}_{i}}=-{{k}_{i}}{{\eta }_{i}}+{{g}_{i}}({{{\bar{x}}}_{i}}){{\eta }_{i+1}}+{{g}_{i}}({{{\bar{x}}}_{i}}){{x}_{i+1,e}},\ \ i\in {{\mathbb{I}}_{2:n-1}} \nonumber \\ 
 & {{{\dot{\eta }}}_{n}}=-{{k}_{n}}{{\eta }_{n}}+{{g}_{n}}({{{\bar{x}}}_{n}})\Delta u  
\end{align}
whose initial condition is $\rho(0)=\rho_0$, $\eta_i(0)=0$, $i\in {{\mathbb{I}}_{2:n}}$. The $f_p(z,\mu) \in [0,1]$ is a performance safety evaluation (PSE) function related to the error $z$ and the performance-related $\mu$.  The $f_t(e,\rho) \in [0,1]$ is a dead zone transition (DZT) function for the region around $e=0$. Their specific forms of $f_p(z,\mu)$ and $f_t(e,\rho)$ will be given later.

Through deduction, the derivative of ${{z}}$ becomes
\begin{align} \label{Eq_4_5}
  & {{{\dot{z}}}_{1}}=\mu \left( {{f}_{1}}({{x}_{1}})+{{g}_{1}}({{x}_{1}}) \right. \notag \\
  & \ \ \ \ \  \times \left( {{x}_{2,d}}+{{z}_{2}}+{{\eta }_{2}}+{{x}_{2,e}} \right) \left.-{{{\dot{y}}}_{r}}-\dot{\rho }({e}/{\rho })+{{\omega }_{1}} \right), \notag \\ 
 & {{{\dot{z}}}_{i}}={{f}_{i}}({{{\bar{x}}}_{i}})+{{g}_{i}}({{{\bar{x}}}_{i}})({{x}_{i+1,d}}+{{z}_{i+1}})   +{{k}_{i}}{{\eta }_{i}}-{{{\dot{x}}}_{i,c}}+{{\omega }_{i}}, \notag\\ 
 & {{{\dot{z}}}_{n}}={{f}_{n}}({{{\bar{x}}}_{n}})+{{g}_{n}}({{{\bar{x}}}_{n}})u+{{k}_{n}}{{\eta }_{n}}-{{{\dot{x}}}_{n,c}}+{{\omega }_{n}}   
\end{align}
where $i\in {{\mathbb{I}}_{2:n-1}}$. By viewing $x_{i+1,d}$ as a virtual control to $x_i$-dynamics, we obtain the stabilizing functions
\begin{align} \label{Eq_4_6}
  & {{x}_{2,d}}=g_{1}^{{-1}}({{x}_{1}})\left( -{{k}_{1}}{{\mu }^{\nu -1}}{{z}_{1}}-{{f}_{1}}({{x}_{1}})+{{{\dot{y}}}_{r}} \right), \\ 
 & {{x}_{3,d}}=g_{i}^{{-1}}({{{\bar{x}}}_{i}})\left( -{{k}_{2}}{{s}_{2}}-{{f}_{2}}({{{\bar{x}}}_{2}})-{{g}_{1}}({{x}_{1}})\mu {{z}_{1}}+{{{\dot{x}}}_{2,c}} \right), \notag \\ 
 & {{x}_{i+1,d}}=g_{i}^{{-1}}({{{\bar{x}}}_{i}})\left( -{{k}_{i}}{{s}_{i}}-{{f}_{i}}({{{\bar{x}}}_{i}})-{{g}_{i-1}}({{{\bar{x}}}_{i-1}}){{z}_{i-1}}+{{{\dot{x}}}_{i,c}} \right), \notag  \\ 
 & u=g_{n}^{{-1}}({{{\bar{x}}}_{n}})\left( -{{k}_{n}}{{s}_{n}}-{{f}_{n}}({{{\bar{x}}}_{n}})-{{g}_{n-1}}({{{\bar{x}}}_{n-1}}){{z}_{n-1}}+{{{\dot{x}}}_{n,c}} \right) \notag 
\end{align}
with $i\in {{\mathbb{I}}_{3:n-2}}$, the parameter $\nu >-1$. At the end of the recursive procedure, we obtain the compensated error system
\begin{align} \label{Eq_4_7}
  & {{{\dot{z}}}_{1}}=-{{k}_{1}}{{\mu }^{\nu }}{{z}_{1}}+\mu({{g}_{1}}({{x}_{1}}) {{z}_{2}}+ {{\omega }_{1}})+\mu f_p{{k}_{\rho}}({e}/{\rho })  (\rho -{{\rho }_{\infty }})  \nonumber \\ 
 & \ \ \ \ \ \ + \mu [1+(f_p-1)f_t]{{g}_{1}}({{x}_{1}})\left( {{\eta }_{2}}+{{x}_{2,e}} \right),  \nonumber  \\ 
 & {{{\dot{z}}}_{2}}=-{{k}_{2}}{{z}_{2}}+{{g}_{2}}({{{\bar{x}}}_{2}}){{z}_{3}}-\mu{{g}_{1}}({{x}_{1}}) {{z}_{1}}+{{\omega }_{2}},  \nonumber  \\ 
 & {{{\dot{z}}}_{i}}=-{{k}_{i}}{{z}_{i}}+{{g}_{i}}({{{\bar{x}}}_{i}}){{z}_{i+1}}-{{g}_{i-1}}({{{\bar{x}}}_{i-1}}){{z}_{i-1}}+{{\omega }_{i}},\ \ i\in {{\mathbb{I}}_{3:n-1}}\ \nonumber  \\ 
 & {{{\dot{z}}}_{n}}=-{{k}_{n}}{{z}_{n}}-{{g}_{n-1}}({{{\bar{x}}}_{n-1}}){{z}_{n-1}}+{{\omega }_{n}}.   
\end{align}

\begin{remark}
Influence of $\nu$'s selection. Invoking the definition of $\mu(e,\rho)$, what $\mu(e,\rho)$ represents is the partial derivative of the transformed error variable $z_1$ with respect to $e$.  
\begin{enumerate} [(i)]
\item If $\nu \le -1$, the performance constraint will be reduced to the control of $e$, making the performance constraint meaningless, so this is not desirable;
\item In addition, the choice of large $\nu$ would undoubtedly increase the stabilizing function $x_{i,d}$ of various orders, and the input control quantity $u$ would be more sensitive to external disturbances and noise.
\end{enumerate}
\end{remark}

\begin{remark}
No $\dot{\rho }({e}/{\rho })$'s appearance in ${{x}_{2,d}}$. Different from the traditional PPC, we give up the compensation for the performance derivative in ${{x}_{2,d}}$, but choose to treat this item as an interface for the impact of DEVs. 
This contributes to the dynamics of $\rho$ in (\ref{Eq_4_4}). 
\end{remark}

Let us rewrite the system (\ref{Eq_4_7}) in the following compact form
\begin{equation} \label{Eq_4_8}
\begin{aligned}
\dot{z}=&A({{\bar{x}}_{n}},\mu )z +B_1 \mu f_p {{k}_{\rho}}({e}/{\rho })(\rho -{{\rho }_{\infty }}) \\
&+ B_1 \mu [1+(f_p-1)f_t]{{g}_{1}}({{x}_{1}})\left( {{\eta }_{2}}+{{x}_{2,e}} \right)) +{{D}_{\mu }}\omega 
\end{aligned}
\end{equation}
where the definitions of  $A({{\bar{x}}_{n}},\mu )$ and ${{D}_{\mu }}=D_{\mu }^{1}$  are the same as those in Proposition \ref{Proposition_Performance}. 
 
Similar to the fashion in Section III.A, three compact sets and their relationship are illustrated in Fig. \ref{fig:Fig_explain2}.

\emph{User-defined invariant set}: define the region $\Omega =\left\{ z|\Phi={{z}^{T}}D_{\mu }^{{\nu }/{2}\;}PD_{\mu }^{{\nu }/{2}\;}z \le 1 \right\}$, $P>0$, as an invariant set.

\emph{User-defined safety region}: define the region ${{\Omega }_{0}}=\left\{ z|\Phi \le {{\Phi}_{0}} \right\}$ as a safety region. The user-defined variable $0<{{\Phi}_{0}}<1$ determines the safety region $\Omega_0$'s size. 

\emph{User-defined dead zone}: define the region ${{\Omega }_{\varepsilon}}=\left\{ z| \Phi > {{\Phi}_{0}} \right.$   $ \left. \cap \vert e \vert  /\rho\le \varepsilon \right\}$ as a dead zone, where $\varepsilon>0$ determines the width of ${{\Omega }_{\varepsilon}}$.  The reason for the existence of this dead zone is due to the term $\dot{\rho }({e}/{\rho })$ in the derivative of $z_1$ goes to 0 along $e$ despite affected by $\dot{\rho}$. On the contrary, if this dead zone is not divided, according to the design in $(\ref{Eq_4_4})$, the derivative of $\rho$ would become very large and its sign would also change with $e$.

\begin{figure}[t]
\centerline{\includegraphics[width=0.22\textwidth]{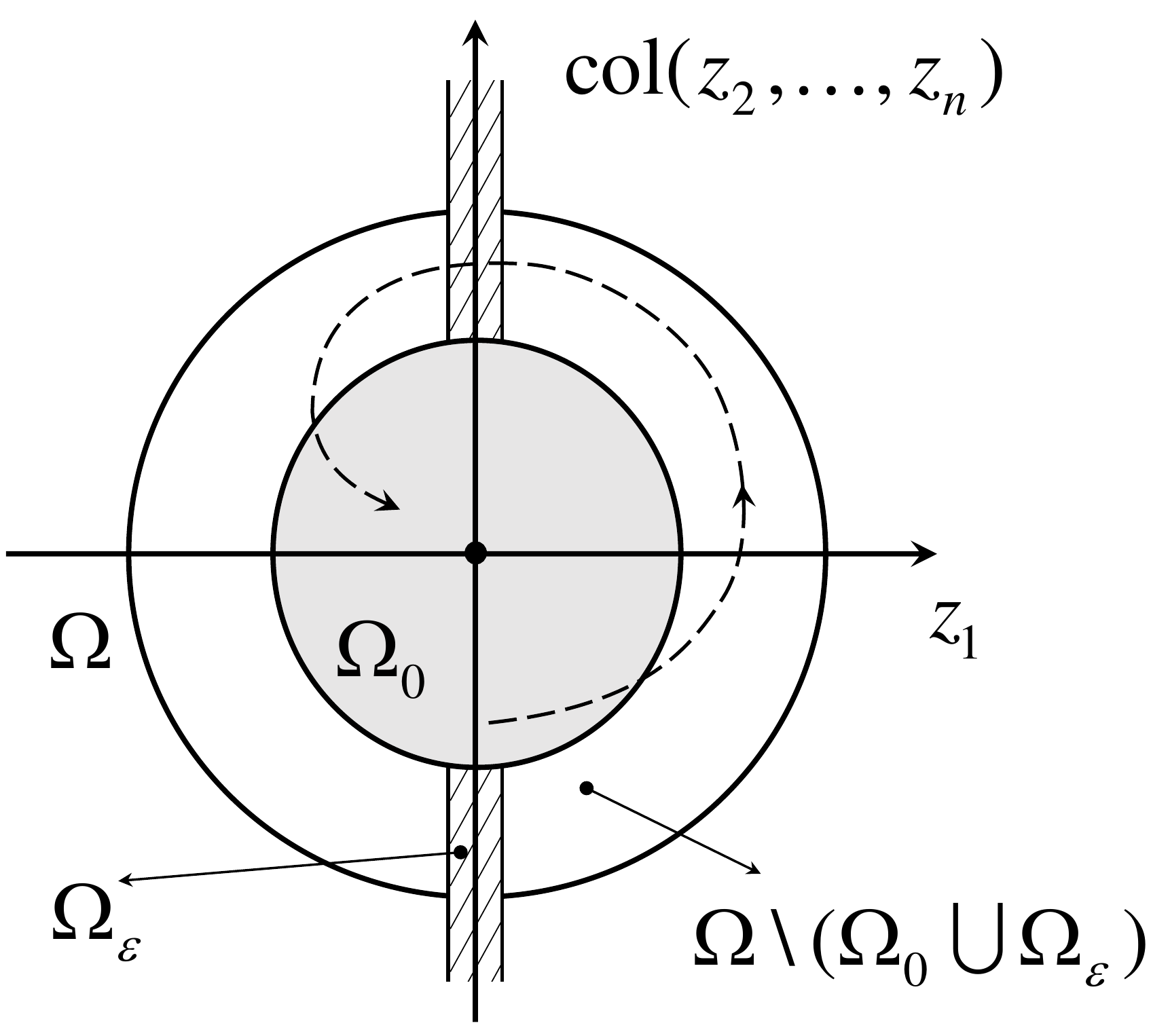}}
\caption{Illustration of invariant set $\Omega$, safety region $\Omega_0$ and dead zone $\Omega_\varepsilon$.}
\label{fig:Fig_explain2}
\end{figure} 

Henceforth, we provide the PSE $f_p(z,\mu)$ for the balance between the performance constraint and effects from DEVs and the DZT $f_t(e,\rho)$ for the dead zone transition.

\begin{align}  
& f_p(z,\mu)= \left\{ \begin{matrix}
\begin{aligned}
&  0,  \ \ \ \ \ \ \ \ \ \ \ \ \ \ \ \  \ \ \ \ \ \  \text{if}\  \Phi \ge 1  \\
& {(1-\Phi)}/{(1-{{\Phi}_{0}})},  \ \ \text{if}\ \Phi_0<\Phi<1  \\
&  1,   \ \ \ \ \ \ \ \ \ \ \ \ \ \ \ \  \ \ \ \ \ \  \text{if}\ \Phi \le {{\Phi}_{0}}  \\
\end{aligned}
\end{matrix} \right.   \label{Eq_4_8a}  \\
& f_t(e,\rho)=\left\{ \begin{matrix}
\begin{aligned}
&0,  \ \ \ \ \ \ \ \ \ \ \ \ \ \ \  \ \ \  \  \text{if}\ \vert {e}\vert/\rho \le  0.5\varepsilon \\
&(\vert {e}\vert/\rho-\varepsilon )/\varepsilon  ,\ \ \ \ \   \text{if}\ 0.5\varepsilon <\vert {e}\vert/\rho<  \varepsilon \\
&1, \ \ \ \ \ \ \ \ \ \ \ \ \ \ \ \  \ \ \  \text{if}\ \vert {e}\vert/\rho>  \varepsilon
\end{aligned}
\end{matrix} \right. \label{Eq_4_8b}
\end{align}

\subsection{Stability analysis}

\begin{theorem} \label{Theorem_BPC}
Consider the nonlinear system in form of (\ref{Eq_2_1}) under Assumptions \ref{Assumption_1}-\ref{Assumption_3}. With the auxiliary system (\ref{Eq_4_4}), the control law is designed according to (\ref{Eq_4_6}). The dynamic of performance function is given in the first equation of (\ref{Eq_4_4}). The
PSE $f_p(z,\mu)$ and DZT $f_t(e,\rho)$ are specified as (\ref{Eq_4_8a}) and (\ref{Eq_4_8b}), respectively.  Then, for the initial condition $s(0)=z(0) \in \Omega$, we have the following conditions: (i) under the premise $\vert e \vert/\rho \ge \varepsilon$, the compensated tracking error $z$ is always contained in the invariant set $\Omega$; 
(ii) the performance constraint $-\underline{\delta }\rho (t)<e(t)<\bar{\delta }\rho (t)
$ is guaranteed, meanwhile, $\rho$ finally converges to $\rho_\infty$;
(iii) the solutions of the closed-loop system are uniformly bounded.
\end{theorem}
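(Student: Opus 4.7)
The plan is to mirror the two-regime structure of the proof of Theorem~\ref{Theorem_BCF}, with Proposition~\ref{Proposition_Performance} now playing the role that Proposition~\ref{Proposition_Tracking} played before. First I would partition the state space into the safety region $\Omega_0$ (where $f_p=1$), the outer shell $\Omega\setminus\Omega_0$ (where $f_p$ transitions down to $0$), and the dead zone $\Omega_\varepsilon$ that cuts across both (where $f_t=0$). In each region the compact-form system (\ref{Eq_4_8}) and the $\rho$-equation in (\ref{Eq_4_4}) reduce to different shapes, and the goal is to show that, taken together, these reductions match the hypotheses of Proposition~\ref{Proposition_Performance} in precisely the regime where invariance is claimed.

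For part (i), the key algebraic observation is that when the trajectory reaches $\partial\Omega$ (so $\Phi=1$, forcing $f_p=0$) while simultaneously $|e|/\rho\ge\varepsilon$ (so $f_t=1$), the coefficient $1+(f_p-1)f_t$ multiplying $g_1(x_1)(\eta_2+x_{2,e})$ in (\ref{Eq_4_8}) equals zero and the $f_p$-weighted compensation term in $\dot\rho$ also vanishes. The closed-loop dynamics then collapse to $\dot z = A(\bar x_n,\mu)z + D_\mu\omega$, which is exactly the form covered by Proposition~\ref{Proposition_Performance}; combined with the disturbance bound of that proposition this yields $\dot\Phi\le 0$ on $\partial\Omega$, establishing forward invariance of $\Omega$.

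For parts (ii) and (iii), I would leverage $\Phi\le 1$ to bound $|z_1|$ away from the ETF's singularities, so that $e/\rho=T(z_1)\in(-\underline\delta,\bar\delta)$ delivers the performance constraint directly from the strict monotonicity of $T$. Convergence $\rho\to\rho_\infty$ is then read off the $\rho$-dynamics inside $\Omega_0$: there $\dot\rho=-k_\rho(\rho-\rho_\infty)$ gives exponential decay, and one argues via Assumption~\ref{Assumption_3} that the trajectory cannot persistently linger in the outer shell long enough to prevent $\rho\to\rho_\infty$. Uniform boundedness of the remaining signals follows because the $(\eta_i,z_i)$ subsystems for $i\ge 2$ have the same structure as in the BCFB proof, so a composite Lyapunov function $V_s=\tfrac{1}{2}\sum_{i=2}^{n}(z_i^2+\eta_i^2)$, together with boundedness of the filter errors $x_{i,e}$ and of the saturation quantity $\Delta u$ (established exactly as in Theorem~\ref{Theorem_BCF}), closes the argument.

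The main obstacle is the dead zone regime: the term $(\rho/e)g_1(x_1)(\eta_2+x_{2,e})$ in $\dot\rho$ is singular as $e\to 0$, which is precisely what $f_t$ is designed to neutralise, and one must verify rigorously that the switching structure in (\ref{Eq_4_8b}) annihilates the singularity smoothly without chattering at the boundaries $|e|/\rho=0.5\varepsilon$ and $|e|/\rho=\varepsilon$. A secondary difficulty is the asymptotic convergence claim for $\rho$ when the trajectory may leave $\Omega_0$ intermittently: one needs to quantify how the dwell time in $\Omega\setminus\Omega_0$ trades off against the exponential mode active inside $\Omega_0$, so that the overall flow of $\rho$ still reaches $\rho_\infty$ in the limit without appealing to a persistent-residence hypothesis.
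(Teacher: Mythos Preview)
Your proposal is correct and follows essentially the same three-case decomposition as the paper: outer boundary with $f_p=0$, $f_t=1$ reducing to the clean form of Proposition~\ref{Proposition_Performance}; safety region $\Omega_0$ with $f_p=1$ giving exponential $\rho$-decay and tolerable DEVs (the paper simply defers to Case~2 of Theorem~\ref{Theorem_BCF} here); and the dead zone $\Omega_\varepsilon$. The only place the paper supplies an argument you did not anticipate is the dead-zone obstacle: rather than analysing smoothness of the switching in $f_t$, the paper notes that $z\in\Omega_\varepsilon$ forces $|e|/\rho\le\varepsilon$, hence $z_1=T^{-1}(e/\rho)$ is confined to a small neighbourhood of the origin, so violating the performance constraint (which would require $z_1\to\pm\infty$) is impossible without first exiting $\Omega_\varepsilon$, after which the trajectory re-enters $\Omega\setminus(\Omega_0\cup\Omega_\varepsilon)$ or $\Omega_0$.
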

 
\begin{proof}
The proof is divided into three parts according to the inclusion relationship: (i) $z \notin \Omega $ and $\vert e \vert/\rho \ge \varepsilon$; (ii) $z \in \Omega_0 $; (iii) $z \in \Omega_{\varepsilon}$. Their corresponding presentations are provided in the following cases. 

\emph{Case 1.} $z \notin \Omega $ and $\vert e \vert/\rho \ge \varepsilon$ which bring $f_p=0$ and $f_t=1$.  The system trajectory runs out of the region $\Omega $ and converts the expression (\ref{Eq_4_8}) into the simplified form
$\dot{z}=A({{\bar{x}}_{n}},\mu )z+{{D}_{\mu }}\omega$. 
Following the development of Proposition \ref{Proposition_Performance}, the invariant set $\Omega$ for the performance-constrained control is thus established. Consequently, since $z$ is always inside  $\Omega$, the performance constraint is obvious guaranteed.

\emph{Case 2.} $z \in \Omega_0 $ which brings $f_p=1$,  $\dot{\rho}=-{{k}_{\rho}}(\rho -{{\rho }_{\infty }})$. The evaluation of system performance is in a safe state, that is, the DEVs are still in a tolerable range and $\rho$ finally converges to $\rho_\infty$. The detailed discussion is in a similar fashion to the previous Case 2 in the proof of Theorem \ref{Theorem_BCF}, which is omitted for concision.

\emph{Case 3.} $z \in \Omega_{\varepsilon}$ which brings $f_t=0$, $\dot{\rho}=-f_p{{k}_{\rho}}(\rho -{{\rho }_{\infty }})$ due to (\ref{Eq_4_4}).
Consider now the special dynamics form of $z$
\begin{equation} \label{Eq_4_11}
\begin{aligned}
\dot{z}=&A({{\bar{x}}_{n}},\mu )z +\mu {{B}_{1}}(f_p{{k}_{\rho}}({e}/{\rho })(\rho -{{\rho }_{\infty }}) \\
&+{{g}_{1}}({{x}_{1}})({{\eta }_{2}}+{{x}_{2,e}}) )+{{D}_{\mu }}\omega.
\end{aligned}
\end{equation}
Obviously, we can reach the following consensus: (i) $z_1$ belongs to a small vicinity of the origin, which means that $e$ is within the constraint range of $\rho$;
(ii) it is impossible to violate the performance constraints $-\underline{\delta }\rho (t)<e(t)<\bar{\delta }\rho (t)$ ($z_1 \to \pm \infty$) without leaving the dead zone $\Omega_\varepsilon$. In this sequel, we get the inference that the system trajectory will not be trapped in the dead zone $\Omega_\varepsilon$, and will inevitably return to the region $\Omega \setminus (\Omega_0 \cup \Omega_\varepsilon)$ or the invariant set $\Omega_0$.

The above facts show the balance operation between the performance and saturation constraints.
\end{proof} 

\begin{figure}[t]
\centering
\begin{minipage}{0.49\linewidth} 
  \centerline{\includegraphics[width=4.65cm]{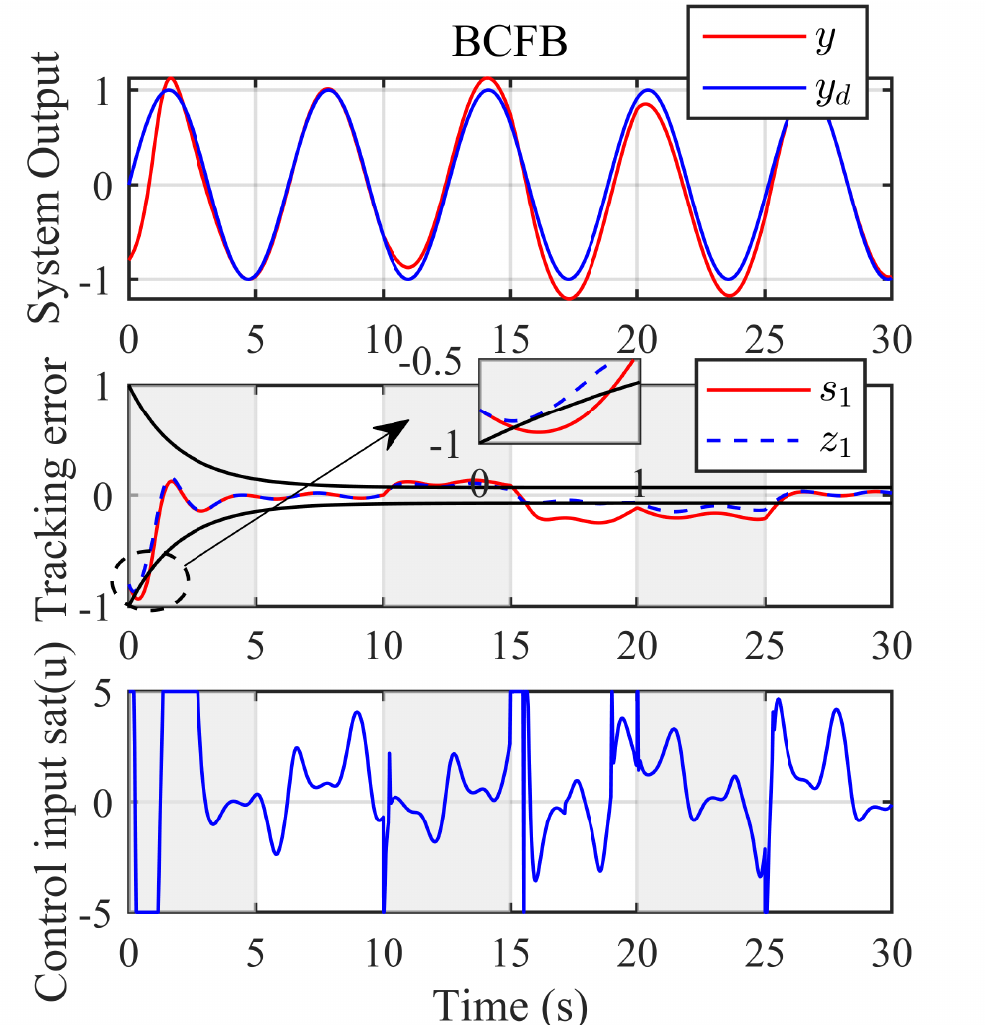}}
\end{minipage}
\hfill
\begin{minipage}{0.49\linewidth} 
  \centerline{\includegraphics[width=4.65cm]{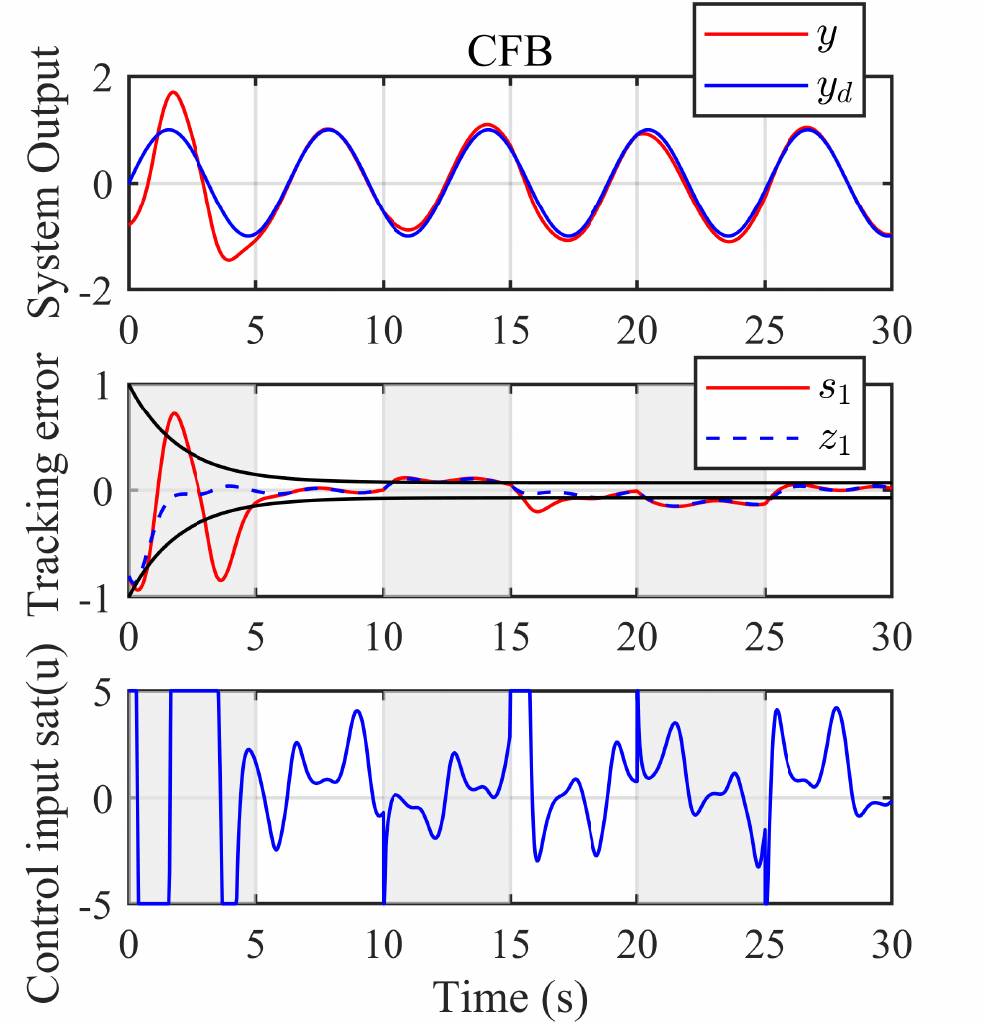}}
\end{minipage}
\caption{Case A: Tracking performance comparisons between BCFB and CFB including the system output, tracking error and control input.}
\label{CaseA_1}
\end{figure}

\begin{figure}[tbp]
\centering
\begin{minipage}{0.49\linewidth} 
  \centerline{\includegraphics[width=4.65cm]{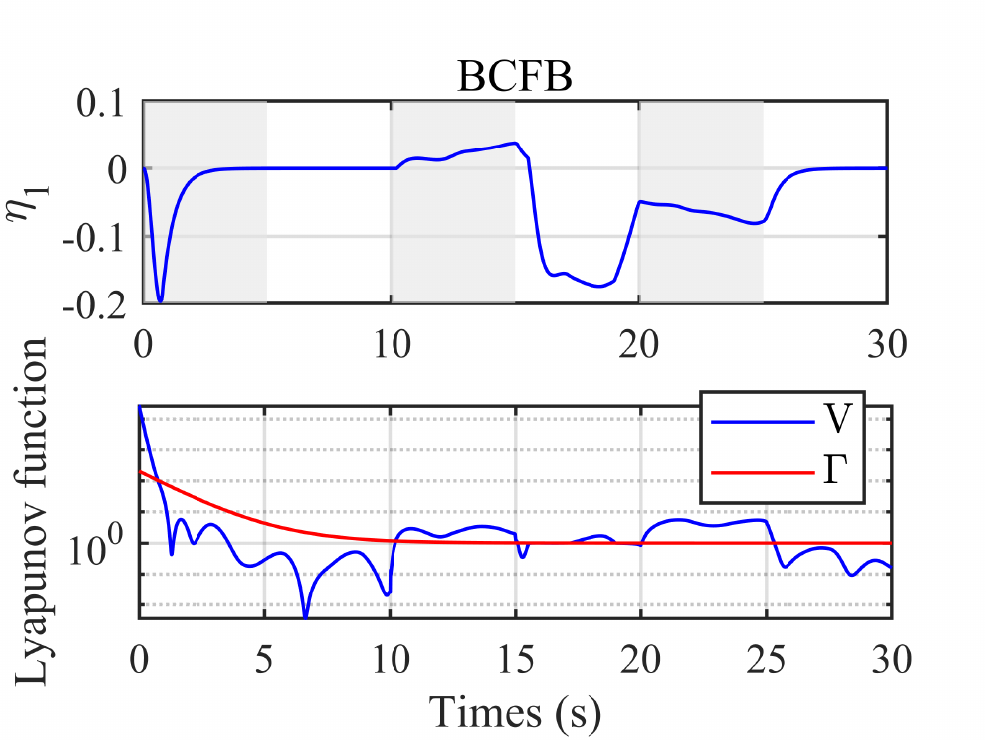}}
\end{minipage}
\hfill
\begin{minipage}{0.49\linewidth} 
  \centerline{\includegraphics[width=4.65cm]{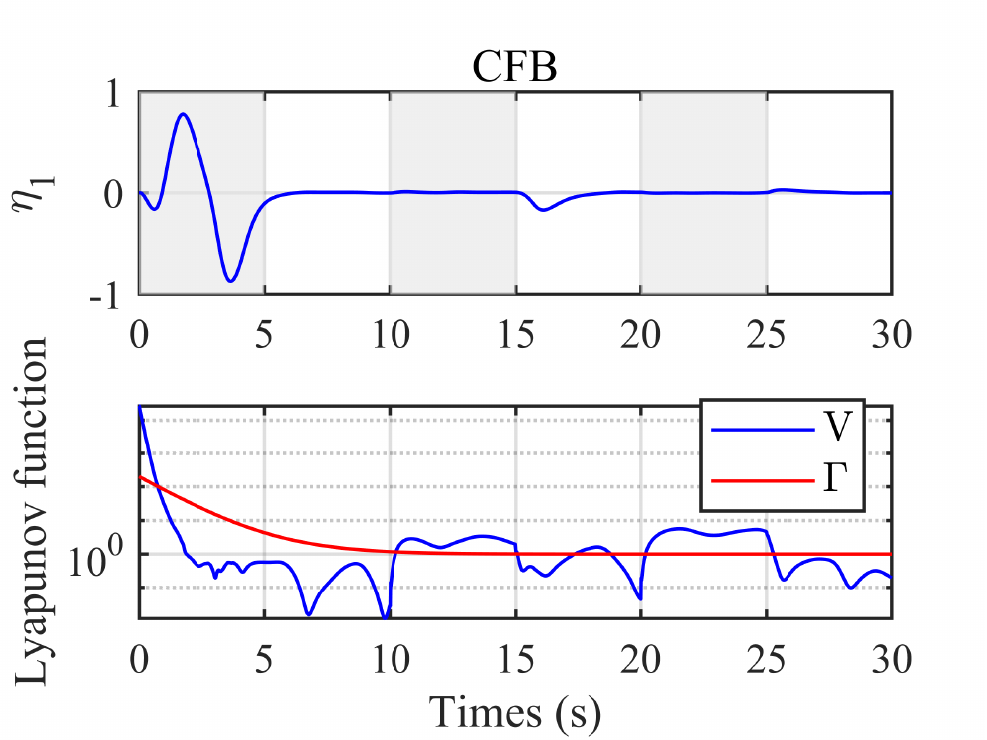}}
\end{minipage}
\caption{Case A: Balance operation comparisons between BCFB and CFB including the compensation signal $\eta_1$ and the Lyapunov function $V$.}
\label{CaseA_2}
\end{figure}

\begin{figure}[tbp]
\centering
\begin{minipage}{0.49\linewidth} 
  \centerline{\includegraphics[width=4.65cm]{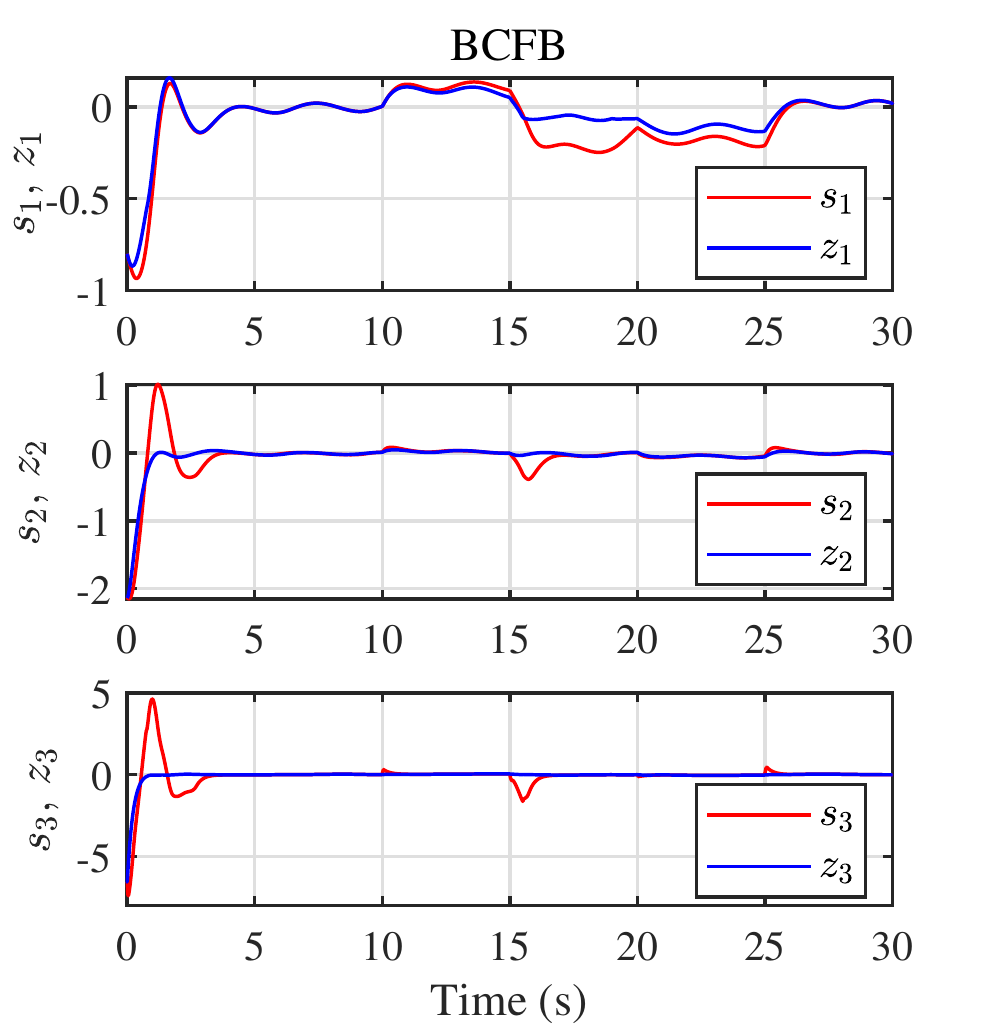}}
\end{minipage}
\hfill
\begin{minipage}{0.49\linewidth} 
  \centerline{\includegraphics[width=4.65cm]{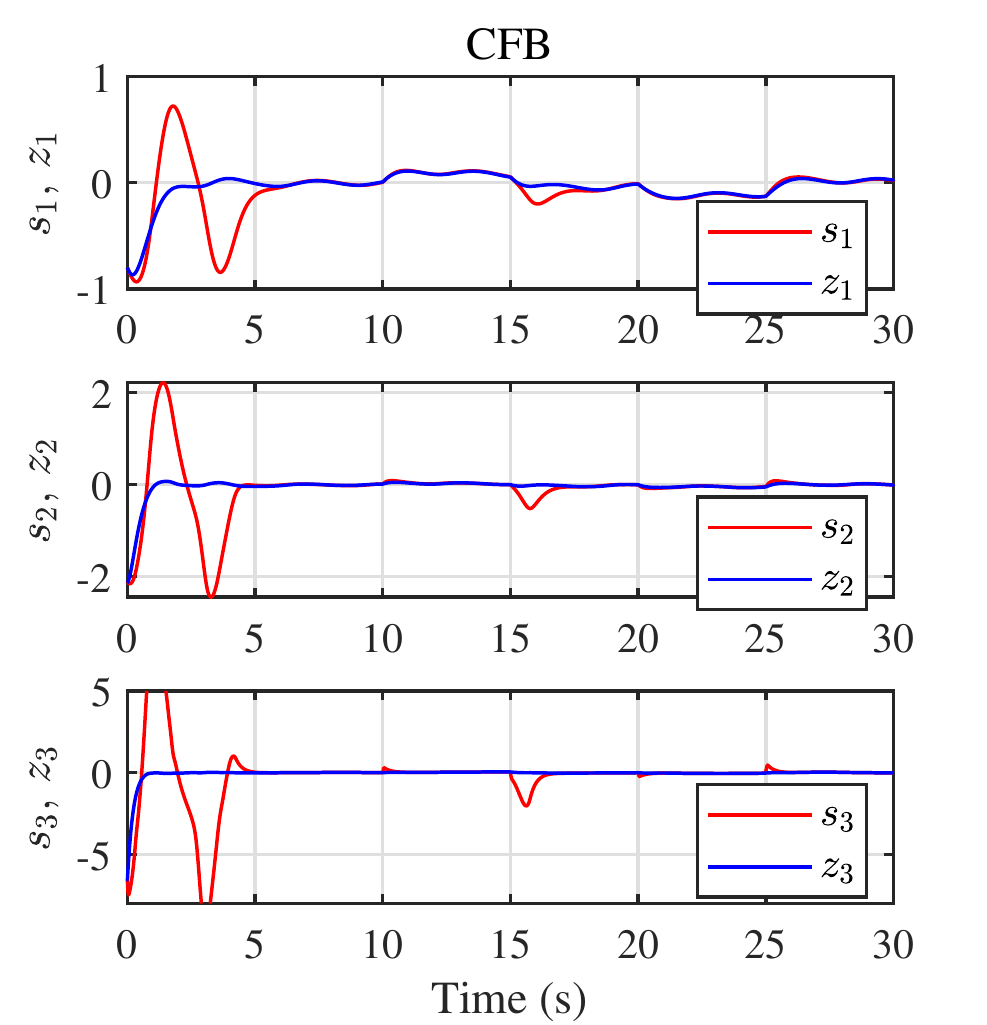}}
\end{minipage}
\caption{Case A: Command filter error comparison between BCFB and CFB.}
\label{CaseA_3}
\end{figure}

\begin{remark}
In FPC \cite{yong2020flexible}, the saturations are fed back to the performance functions through a novel auxiliary system, consequently, the resulting performance function can degrade the user-specified performance constraints when saturation happens and can recover it when  no saturation occurs. We summarize the following differences between FPC and BPC:
\begin{enumerate} [(i)]
\item  For handling the input saturation, the FPC splits the impact of saturation and passes it on to the upper and lower performance functions.
However, the proposed BPC adopts the adaptive form of performance bounds without compensation signals;

\item  The FPC ignores the system's own adjustment ability, which means that the system is of some tolerance to saturation. In BPC, a PSE parameter is defined for the evaluation of the system safety 
to decide whether to do compensation for the operation of 'balance'; 
\item  The FPC has not been able to realize the compensation of the filter errors from the dynamic surface control. 
When encountering sudden disturbance, the chatter would happen and threaten performance constraints.
Consequently, the proposed BPC feeds back the filter errors into the performance function design.
\end{enumerate} 
\end{remark}

\section{Numerical examples}

 \begin{figure}[tp]
\centering
\begin{minipage}{0.49\linewidth} 
  \centerline{\includegraphics[width=4.65cm]{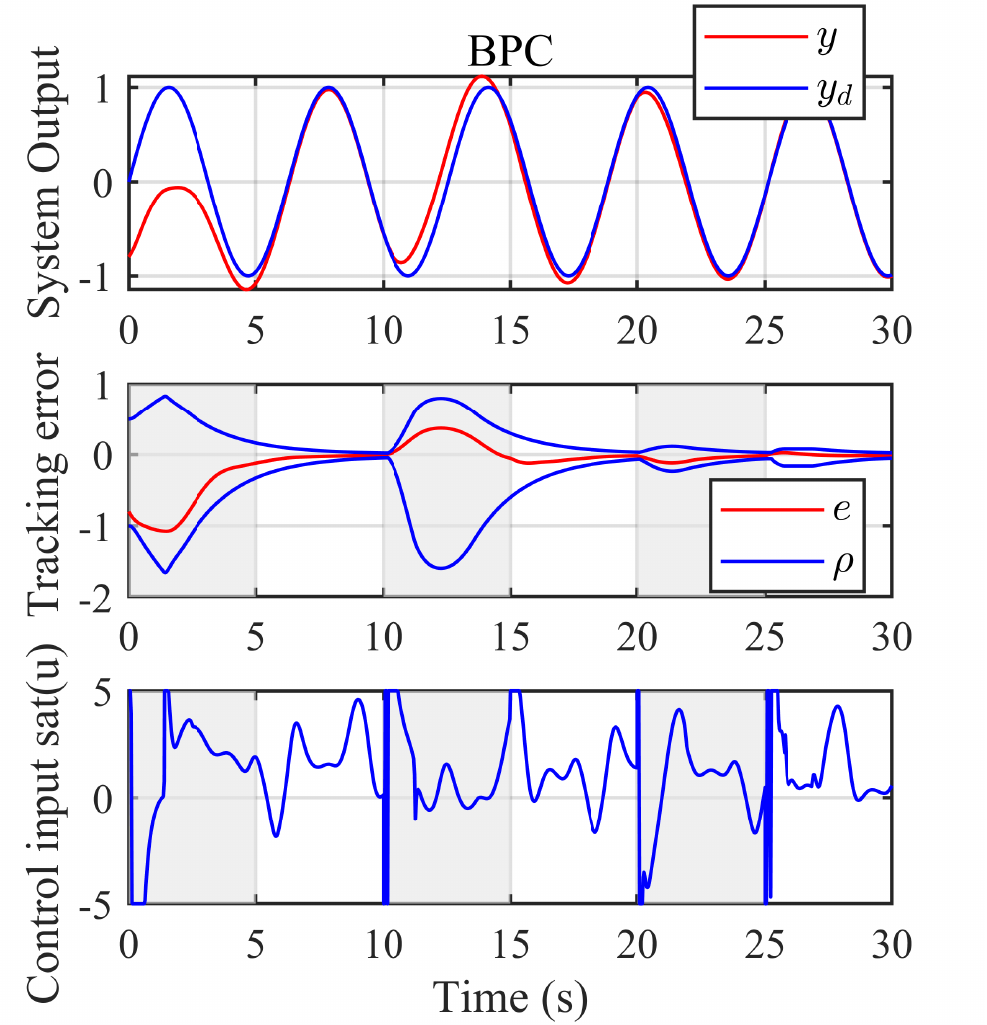}}
\end{minipage}
\hfill
\begin{minipage}{0.49\linewidth} 
  \centerline{\includegraphics[width=4.65cm]{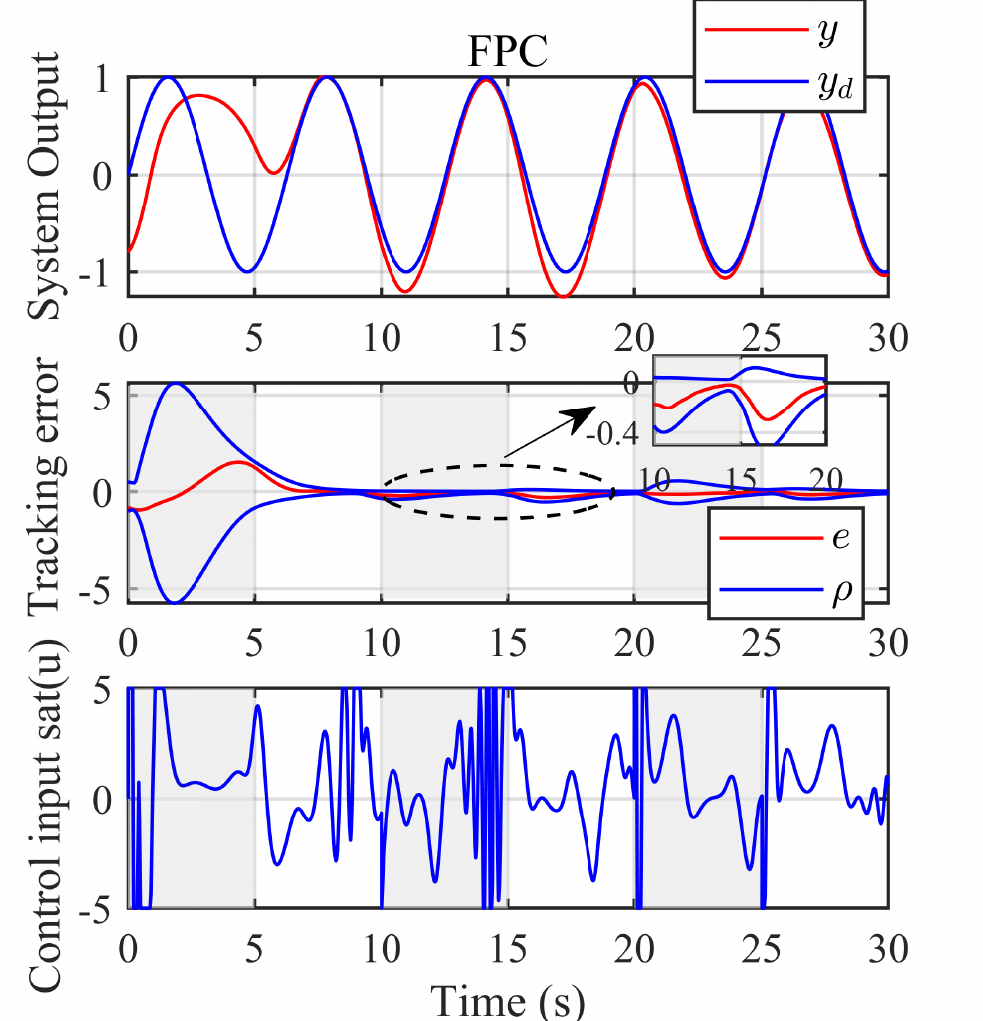}}
\end{minipage}
\caption{Case B: Tracking performance comparisons between BPC and FPC including the system output, tracking error and control input.}
\label{CaseB_1}
\end{figure}

\begin{figure}[tbp]
\centering
\begin{minipage}{0.49\linewidth} 
  \centerline{\includegraphics[width=4.65cm]{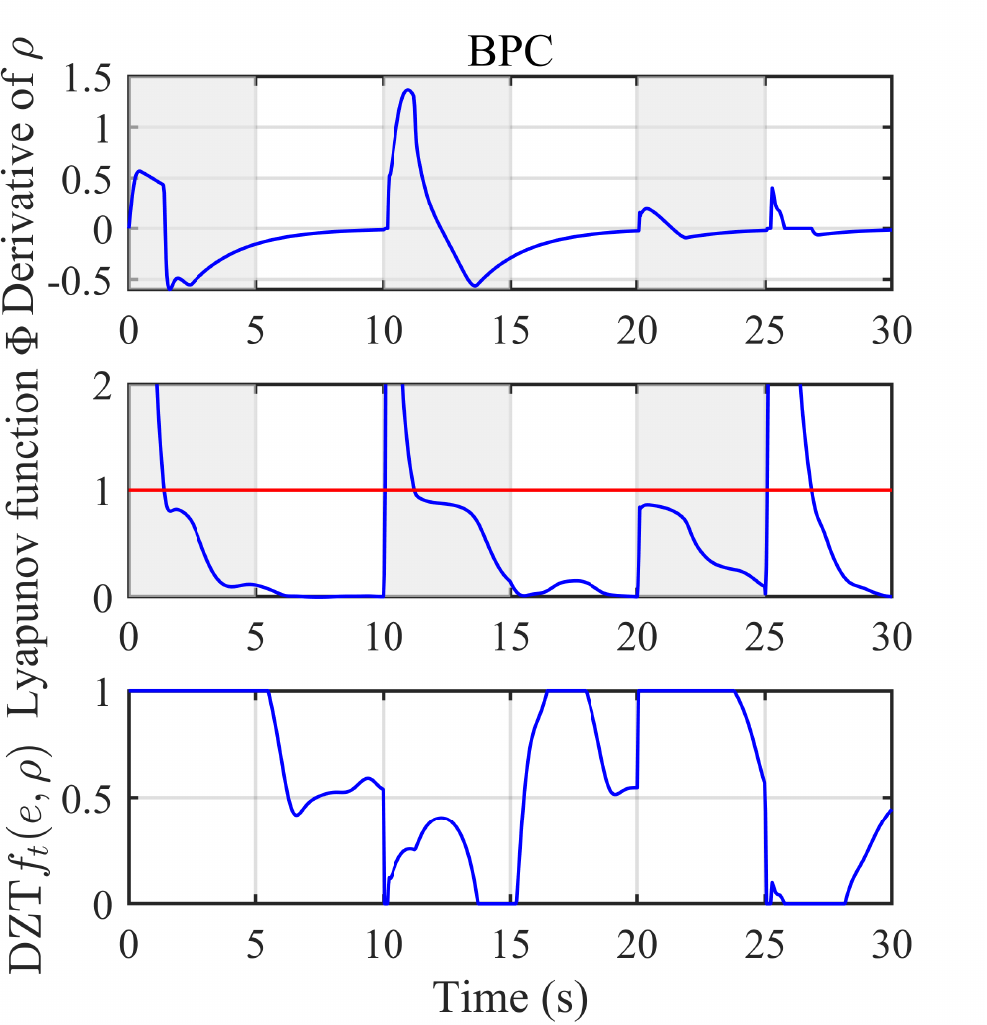}}
\end{minipage}
\hfill
\begin{minipage}{0.49\linewidth} 
  \centerline{\includegraphics[width=4.65cm]{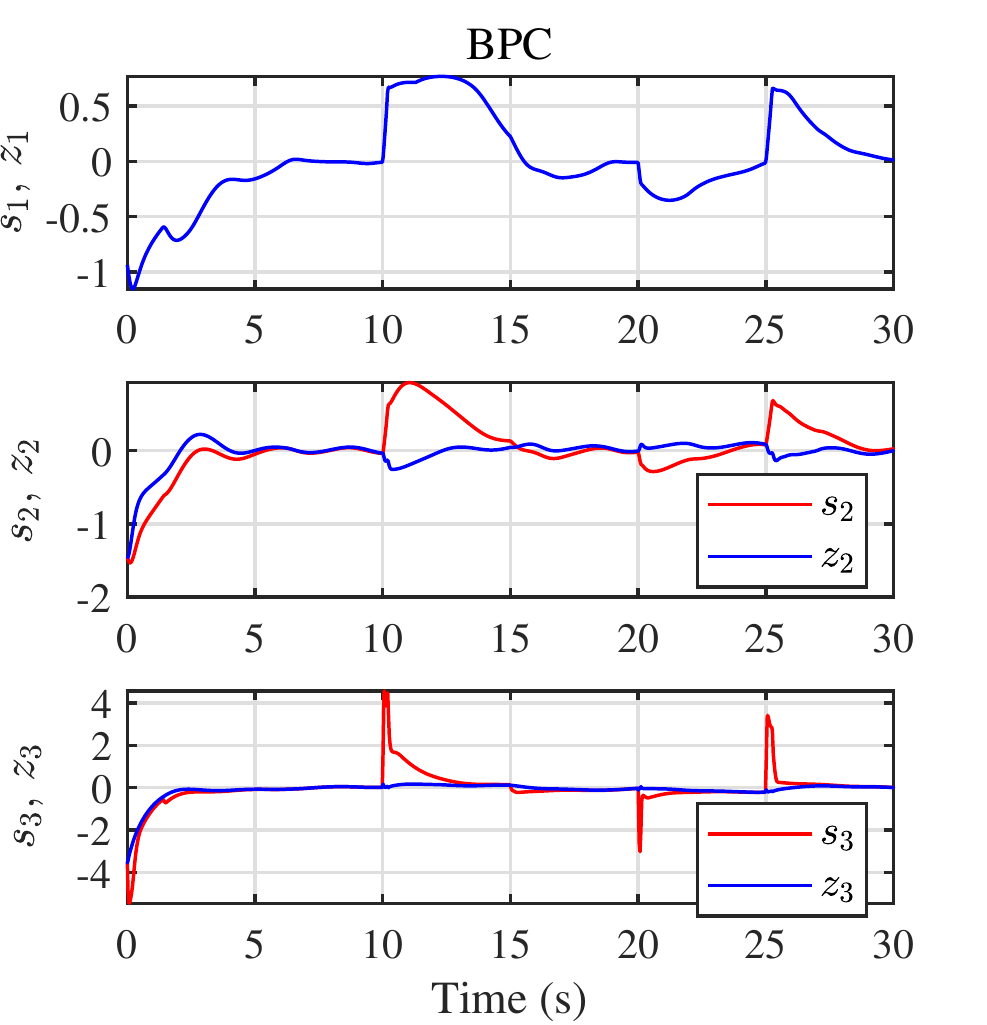}}
\end{minipage}
\caption{Case B: (Left) Balance operations of BPC including the derivative of performance function $\rho$, the Lyapunov function $\Phi$ and the parameter DZT; (Right) Command filter error comparisons  between $s_i$ and $z_i$ in BPC.}
\label{CaseB_2}
\end{figure}

Consider the system (\ref{Eq_2_1}) for $n=3$.
$f_1(x_1)={\sin ({{x}_{1}})}/{(1+x_{1}^{2})}\;$, $f_2(\bar{x}_2)=\tanh ({{x}_{2}})\exp (-{{({{x}_{1}}{{x}_{2}})}^{4}})$ and $f_3(\bar{x}_3)={{x}_{1}}{{x}_{2}}$ are known smooth functions; $g_1(x_1)=g_2(\bar{x}_2)=g_3(\bar{x}_3)=1$ are constant; $\omega_1(t)=0.1\sin ({{x}_{1}})\cos (t)$, $\omega_2(t)=0.15\sin ({{x}_{1}}{{x}_{2}})$ and $\omega_3(t)=0.1\cos ({{x}_{3}})\sin (t)$ are time-varying disturbances. The control constraint is described by $u_{\text{max}}=-u_{\text{min}}=5$. 
The initial condition is $x(0)={{\left[ -0.8,0.9,0.1 \right]}^{T}}$. The desired trajectory is $y_r=\sin(t)$.

\emph{Case A.}(BCFB vs CFB) To illustrate the effectiveness and merits of the proposed BCFB, the simulation and comparison between BCFB (\ref{Eq_3_2_2})-(\ref{Eq_3_2_4}) and CFB discussed in Remark \ref{Remark_CFB} 
are presented in Fig. \ref{CaseA_1}-\ref{CaseA_2}. The designed parameters of the controller are chosen as $k_1=2$, $k_2=3$, $k_3=4$, $\tau_1=\tau_2=0.01$. The performance constraint for BCFB is characterized by $\kappa=0.5$, $\rho_0=1$, $\rho_\infty=0.1$ and the safety region of BCFB is determined by $\sigma=0.9$. Additional disturbances $\omega_a(t)=0.2\text{exp}({-0.1(t-t_j)})$ interferes with the system when $t_j=10,15,20,25\,\text{s}$, i.e., $\omega_i(t):=\omega_i+\omega_a$ for $i=\mathbb{I}_{1:3}$. It is obvious that the tracking BCFB is improved with smaller tracking errors and fewer response fluctuations, and the problem of input saturation gets alleviated. The coincidence of the tracking error curves $s_1$ and $z_1$ in BCFB is obvious larger than that in CFB. The curves of $\eta_1$ in CFB and BCFB are quite different, which is the main source contributing to the different performances. Accordingly, the Lyapunov function curve in CFB does not truly reflect the  tracking performance of the system. Given the performance bounds, the BCFB balances the contradiction between the performance and saturation and converges the tracking error to the predetermined performance range without PPC control. 

\emph{Case B.}(BPC vs FPC) When it comes to the performance constrained control (BPC, FPC), the designed parameters of the controller are chosen as $k_1=k_2=k_3=1$, $\tau_1=\tau_2=0.01$, $\nu=0$, the initial value and convergence rate of $\rho$ is $\rho_0=1$, $k_\rho=0.5$, respectively. The safety region of BPC is determined by $\Phi_0=0.7$.  Additional disturbances $\omega_a(t)=0.1\text{exp}({-0.1(t-t_j)})$ interferes with the system when $t_j=10,15,20,25\,\text{s}$, i.e., $\omega_i(t):=\omega_i+\omega_a$ for $i=\mathbb{I}_{1:3}$. 
Note that, in Fig. \ref{CaseB_1}, both BPC and FPC achieve flexible performance bounds which avoid saturation damage to system stability. Furthermore, it can be observed that (i) due to the lack of filtering error processing in the FPC, the obtained control input is more susceptible to the external noise; (ii) at the initial moment due to the occurrence of saturation, the performance function in the FPC has greater fluctuations to compensate for the saturation, while the performance function in the BPC has already adjusted the system evaluation to safety at 1.425s; (iii) the fluctuation of the performance function in BPC at 10-15s comes from the tracking error entering the dead zone $ \Omega_{\varepsilon}$.
In Fig. \ref{CaseB_2}, by the action of the proposed BPC, when the system state is outside the invariant set $\Omega$ ($\Phi(t)>1$) due to the sudden disturbance, $\rho$ changes to adapt to the influence of DEVs. Until the system state returns to the invariant set $\Omega$ or the safe zone $\Omega_0$, $\rho$ will be re-driven to converge to $\rho_\infty$.

 Overall, the effectiveness of the proposed balanced control between the performance and saturation has been illustrated.

\section{Conclusion}
In our work, the common constrained control problem for nonlinear systems has been addressed and a balanced control methodology has been presented for the balance between performance and saturation through introducing the concept of the PSE. The invertible DEVs, which threaten the performance of CFB and the safety of PPC, get fully utilized and compensated, yielding the resultiant BCFB and BPC. Through the balance operation according to PSE, (i) in BCFB, the performance-related invariant set drives the tracking error $s_i$ and $z_i$ more consistent; (ii) in BPC, the performance constraint is flexible with the system safety evaluation. 

\appendices
\section{Proof of Proposition 1}
The detailed proof contains the following two parts:

\emph{Performance constraint:} $V(t) \le \Gamma(t) \Rightarrow\vert Cz\vert \le \rho(t)$. 

Given the change of coordinates $\bar{z}=P^{1/2}z$, the Lyapunov function becomes $V(t)=\bar{z}^T\bar{z}$. Likewise, the boundary of performance constraint $\vert  Cz\vert = \rho $ becomes the hyperplane $\vert CP^{-1/2} \bar{z}\vert = \rho$. As a result, the maximum admissible value of the Lyapunov function is the square distance between this hyperplane and the origin, $\Gamma(t)=\Vert \bar{z} \Vert ^2=(\rho(/ \Vert CP^{-1/2}\Vert)^2 $, which is equivalent to  $\Gamma(t)=\rho^2/(CXC^T)$.

\emph{Invariance:} 
Introduce another Lyapunov function candidate $M(t)=V(t)/\Gamma(t)$ whose time derivative yields
\begin{equation} \label{Eq_prop1_1}
\frac{\text{d} M}{\text{d} t}
\le  \frac{1}{\Gamma}(z^T(PA+A^TP+2\kappa P)z+2z^TP\omega).
\end{equation}
Invoking the condition $g_{i,M} > \vert g_i({{{\bar{x}}}_{i}}) \vert > g_{i,m}$ given in Assumption \ref{Assumption_1}, there exist a positive constant  $\epsilon$ and a feasible positive-definite matrix $P$ making $P{{A}_{g}}+A_{g}^{T}P \le \epsilon P$ hold, from which the relation (\ref{Eq_prop1_1}) becomes
\begin{equation}   \label{Eq_prop1_2}
\frac{\text{d} M}{\text{d} t}\le \frac{1}{\Gamma}{{\left[ \begin{matrix}
   z  \\
   \omega   \\
\end{matrix} \right]}^{T}}H\left[ \begin{matrix}
   z  \\
   \omega   \\
\end{matrix} \right]+\frac{\alpha}{\Gamma}\left( {{\omega }^{T}}W\omega -{{z}^{T}}Pz \right)
\end{equation}
where the parameter $\alpha>0$, the matrix 
\[H=\left[ \begin{matrix}
   P{{A}_{0}}+A_{0}^{T}P+(\epsilon +\alpha +2\kappa)P & P  \\
   P & -\alpha W  \\
\end{matrix} \right]\]
is negative semidefinite due to (\ref{Eq_2_4}) and the Schur complement. Since ${{z}^{T}}Pz=\Gamma$ and ${{\omega }^{T}}W\omega \le \Gamma_\infty \le\Gamma$, the invariant set $\Omega$ is thus established.

\section{Proof of Proposition 2}
The derivative of Lyapunov function candidate $V$ along the solution of (\ref{Eq_2_5}) gives
\begin{equation}
\begin{aligned}
  & \frac{\partial V}{\partial z}(Az+\omega )\le {{\left[ \begin{matrix}
   D_{\mu }^{{\nu }/{2}\;}z  \\
   D_{\mu }^{1-{\nu }/{2}\;}\omega   \\
\end{matrix} \right]}^{T}}H\left[ \begin{matrix}
   D_{\mu }^{{\nu }/{2}\;}z  \\
   D_{\mu }^{1-{\nu }/{2}\;}\omega   \\
\end{matrix} \right] \\ 
 & \ \   +\alpha \left( {{\omega }^{T}}D_{\mu }^{1-{\nu }/{2}\;}WD_{\mu }^{1-{\nu }/{2}\;}\omega -{{z}^{T}}D_{\mu }^{{\nu }/{2}\;}PD_{\mu }^{{\nu }/{2}\;}z \right) \\ 
\end{aligned}
\end{equation} 
is negative semidefinite due to (\ref{Eq_2_4}) and the Schur complement. Since ${{z}^{T}}D_{\mu }^{{\nu }/{2}\;}PD_{\mu }^{{\nu }/{2}\;}z=1$ and ${{\omega }^{T}}D_{\mu }^{1-{\nu }/{2}\;}WD_{\mu }^{1-{\nu }/{2}\;}\omega \le 1$, the invariant set is thus established.

\small
\bibliographystyle{ieeetr}
\bibliography{ref}
 
\end{document}